\documentclass[11pt, letterpaper,reqno]{article}

\usepackage{amsmath,amssymb,amsthm, amsfonts, amstext}
\usepackage[numbers]{natbib}
\usepackage{graphicx}
\usepackage{microtype}
\usepackage{hyperref}
\hypersetup{
  colorlinks=true,
  linkcolor=blue,
  citecolor=blue,
  urlcolor=black
}

\usepackage{subcaption}
\usepackage{chngcntr} % counter reset in figures, etc
\usepackage{authblk}
\usepackage{color}
\usepackage[pagewise]{lineno}
\usepackage{setspace}
\newtheorem{Proposition}{Proposition}[section]
\newtheorem{Corollary}{Corollary}[section]

\numberwithin{equation}{section}
\counterwithout{equation}{section}

\newcommand{\R}{\mathbb{R}}

\def \F{\mathbb{F}}

\def \P{\mathbb{P}}

\title{From debt crises to financial crashes (and back): a stock--flow consistent model for stock price bubbles}
\author[1]{Matheus R. Grasselli}
\author[2]{Adrien Nguyen Huu}
\affil[1]{McMaster University, Hamilton, Canada}
\affil[2]{CEE-M, Univ. Montpellier, CNRS, Institut Agro, Montpellier, France}
\date{\today}

\begin{document}
% \title[Debt crises and financial crashes]{From debt crises to financial crashes (and back): a stock--flow consistent model for stock price bubbles}

% \author[M. R. Grasselli]{Matheus R. Grasselli}
% \address{McMaster University, Hamilton, Canada}
% \email{Corresponding author: grasselli@mcmaster.ca}

% \author[A. Nguyen-Huu]{Adrien Nguyen-Huu}
% \address{CEE-M, Univ. Montpellier, CNRS, Institut Agro, Montpellier, France}
% \email{adrien.nguyen-huu@umontpellier.fr}

% \title[Debt crises and financial crashes]{From debt crises to financial crashes (and back): a stock-flow consistent model for stock price bubbles}

% \author{Matheus R. Grasselli$^{1}$ \And Adrien Nguyen-Huu$^{2}$}

% \thanks{$^{1}$McMaster University, Hamilton, Canada. Corresponding Author: \texttt{grasselli@mcmaster.ca}. }
% \thanks{$^{2}$CEE-M, Univ. Montpellier, CNRS, Institut Agro, Montpellier, France. email: \texttt{adrien.nguyen-huu@umontpellier.fr}}

%%%%%%%%%%%%%%%%%%%%%%%%%%%%%%%%%%%%%%%%%%%%%%%%%%%%%%%%%%%%%%%%%%%%%%%%%%%
%%%%%%%%%%%%%%%%%%%%%%%%%%%%%%%%%%%%%%%%%%%%%%%%%%%%%%%%%%%%%%%%%%%%%%%%%%%
%%%%%%%%%%%%%%%%%%%%%%%%%%%%%%%%%%%%%%%%%%%%%%%%%%%%%%%%%%%%%%%%%%%%%%%%%%%

\maketitle

% \begin{abstract}
% [REWRITE] We merge two models by Steve Keen, namely a monetary model of debt-deflation
% and a version with Ponzi destabilization. We recall and study old and new equilibrium properties,
% with local stability analysis.
% We then add an auxiliary stochastic model of financial markets based on a jump--diffusion process
% with endogenous jump intensity.
% This model captures main characteristics of Hyman Minsky's \emph{Financial Instability Hypothesis} (FIH),
% and the \emph{Quantitative Theory of Credit} (QTC) of Richard Werner,
% with an asset price bubble fueled by pure speculative credit and Market crashes impacting the real economy.
% We develop and study fundamental and numerical properties of this model, and its suitability to explain
% Financial crisis and the relationship between growth and interest.
% \end{abstract}

\begin{abstract}
We develop a stochastic macro--financial model in continuous time by integrating two specifications of the Keen economic framework with a financial market driven by a jump-diffusion process. The economic block of the model combines monetary debt-deflation mechanisms with Ponzi-type financial destabilization and is influenced by the financial market through a stochastic interest rate that depends on asset price returns. The financial market block of the model consists of an asset with jump--diffusion price process with endogenous, state-dependent jump intensities driven by speculative credit flows.

The model formalizes a feedback loop linking credit expansion, crash risk, perceived return dynamics, and bank lending spreads. Under suitable parameter restrictions, we establish global existence and non-explosion of the coupled system.

Numerical experiments illustrate how variations in credit sensitivity and jump parameters generate regimes ranging from stable growth to recurrent boom--bust cycles. The framework provides a tractable setting for analyzing endogenous financial fragility within a mathematically well-posed macro--financial system.
\end{abstract}

%\subjclass[2020]{91B55, 60H10, 60G44, 34D20, 37N40}

%\keywords{Asset price bubbles; Jump--diffusion; Stock--flow consistency; Endogenous crash risk; Credit dynamics; Financial instability}

\section{Introduction}
\label{sec:intro}

The causal narrative of modern crises is often told as a transition \emph{from a financial crash to a debt crisis}, emphasizing how banking turmoil propagates into sovereign and private balance sheets \citep{ReinhartRogoff2011}. The present paper completes the cycle by emphasizing the reverse direction: how a debt-driven economy, in which credit creation is endogenous, can generate asset price booms and abrupt market dislocations that then feed back into the real sector. This perspective is aligned with the empirical and institutional view that most money is created by commercial banks when they extend credit, rather than being mechanically “multiplied” from reserves \citep{McLeayRadiaThomas2014,Werner2014}. It also resonates with Minsky’s Financial Instability Hypothesis, where tranquil periods endogenously breed fragility, and where leverage and refinancing conditions are central state variables \citep{Keen1995,Keen2013}.

In parallel, several strands of the bubbles literature in economics and finance have progressively moved beyond the textbook “rational bubble” paradigm. In discrete time and in equilibrium settings, rational expectations bubbles have been studied as self-fulfilling components consistent with no-arbitrage and market clearing \citep{BlanchardWatson1982,Tirole1985}. In continuous-time mathematical finance, an influential approach formalizes bubbles through \emph{strict local martingales} under a pricing measure: discounted asset prices remain local martingales but fail to be true martingales, with tangible implications for option pricing, parity relations, and maturity monotonicity \citep{CoxHobson2005,HestonLoewensteinWillard2007,JarrowProtterShimbo2010}. A complementary economics-based view connects bubble phenomena to trading constraints and market incompleteness \citep{LoewensteinWillard2000a,LoewensteinWillard2000b}, while a more recent probabilistic literature provides sharp characterizations of “bubble times” and their relationship with measure changes and martingale defects \citep{BiaginiFollmerNedelcu2014,KardarasKreherNikeghbali2015}.

A common limitation of most strict-local-martingale or rational-bubble specifications is that the drivers of booms and crashes are typically encoded in \emph{fixed} (or weakly time-varying) parameters: volatility, drift corrections, borrowing constraints, or jump features are postulated \emph{ex ante}, rather than being endogenously tied to credit creation, bank balance-sheet behavior, and debt servicing conditions. Conversely, macro-financial models that take endogenous credit seriously often rely on reduced-form or exogenous representations of financial markets. Stock--flow consistent (SFC) monetary macrodynamics, and in particular the line of models initiated by \citet{Keen1995}, provide a coherent accounting framework in which debt, income flows, and balance sheets evolve jointly, allowing one to study debt-deflation, leverage cycles, and financial fragility within a closed system of stocks and flows \citep{GrasselliCostaLima2012}. Several extensions have enriched this core by introducing inflation and speculative dynamics \citep{GrasselliNguyenHuu2015}, and by clarifying the role of debt-financed investment and instability channels \citep{NguyenHuuPottier2017}. 
Related empirical and microstructural contributions emphasize that market fluctuations can reflect balance-sheet capacity and demand elasticity for risky assets, not only changes in fundamentals \citep{bouchaud2022inelastic, gabaix2021search}.

The objective of this paper is to connect these two literatures on crises—SFC macrodynamics of credit economies and stochastic mathematical models of bubbles—within a single, tractable framework. Concretely, we merge two Keen-style economic blocks (a monetary debt-deflation mechanism and a Ponzi destabilization channel) with an auxiliary stochastic model for the stock market. The financial market is specified as a jump--diffusion process with \emph{endogenous jump intensity}, designed to capture a stylized Minskyan build-up of fragility and abrupt repricings. The economic block provides endogenous state variables (profits, leverage, debt ratios, and repayment capacity), while the market block provides a realistic representation of crashes as discontinuous events rather than smooth mean reversion. The bridge between the two is built to reflect the Quantitative Theory of Credit \citep{Werner2012,Werner2014}: speculative credit fuels asset price appreciation in expansions, while sudden devaluations interact with bank behavior and the cost of credit.

Our contribution is primarily methodological and exploratory. We propose a stock--flow consistent macro-financial model in which (i) the real economy subsystem retains the accounting discipline and instability mechanisms of the Keen tradition, (ii) the financial market subsystem belongs to the modern toolbox of stochastic calculus for bubbles (jump--diffusion dynamics with endogenous crash intensity), and (iii) the coupling between the two is economically interpretable and mathematically controlled. We derive and discuss key qualitative properties of the coupled system, with particular attention to the financial market component and its compatibility with no-arbitrage-inspired modeling principles. We then perform numerical simulations to obtain an intuitive, scenario-based understanding of the joint dynamics and to identify robust channels of amplification across regimes.

The main economic finding concerns the \emph{reaction channel} from financial turmoil to the credit economy. From the economy to markets, the model reproduces the well-known destabilizing role of speculation and credit expansion emphasized in SFC analyses \citep{GrasselliCostaLima2012}. From markets to the economy, the crucial mechanism is the banking sector’s effective lending rate: when markets become turbulent and crash risk rises, banks react in a risk-averse manner by increasing the credit premium, which shocks the interest burden and can destabilize debt-financed investment. The simulations highlight that the functional form and strength of this reaction function determine whether the economy can absorb market agitation or else transitions to a self-reinforcing cycle of contraction in credit, output, and profits.

The paper is organized as follows. 
In Section~\ref{sec:keen}, we develop a version of the Keen model \citep{Keen1995} with inflation and speculation along the lines introduced in \citep{GrasselliNguyenHuu2015}, and revisit its equilibrium and stability properties within a unified stock--flow consistent framework. 
In Section~\ref{sec:stochastic}, we introduce the stochastic model of financial markets motivated by the Quantitative Theory of Credit of \citet{Werner2012}, present the endogenous jump-intensity specification and the banking reaction mechanism, and state the main theoretical properties supporting these modeling choices. 
Section~\ref{sec:numerics} is devoted to numerical computations and simulations of the coupled system, with particular emphasis on how financial characteristics of the economy and banks’ reaction functions affect real-sector dynamics. 
Appendices gather detailed stability analysis, existence results, and numerical algorithms.

\section{The real economy}
\label{sec:keen}

\subsection{Accounting Framework}

We recall the setting in \cite{GrasselliNguyenHuu2015} for the 3-sector stock--flow consistent model corresponding to the balance sheets, transactions, and flow of funds shown in Table \ref{table}. In what follows, we denote a constant parameter by a bar above the letter corresponding to it, for example $\bar \delta$ for the constant depreciation rate, whereas any other quantities without a bar are understood to be time-varying. As usual, $\dot x$ denotes the time derivative of $x$.  

The balance sheets matrix shown in Table \ref{table} records the {\em stocks} (i.e., levels) of all assets and liabilities within the economy in units of currency (e.g., USD), with a positive entry representing an asset of the sector and a negative entry representing a liability. This matrix already embodies many of the assumptions in the model, notably: (i) households hold bank deposits $M_h$ as a monetary asset and also own all the firms and banks in the economy through private equity ${\mathcal E}_h$; (ii) firms hold all the productive capital in the economy, with nominal value $pK$, in addition to holding bank deposits $M_f$ and borrowing through bank loans $L$; and (iii) banks make loans and accept deposits. Accordingly, from the column sums in the balance sheet matrix, we obtain the following relationships between the equity and net worth of the sectors: 
\begin{align}
    X_h &=M_h +{\mathcal E}_h                  &&\text{(net worth of households)}\\
    {\mathcal E}_f &=pK +M_f -L           &&\text{(equity of firms)} \label{eqn:firms_net_worth} \\ 
    {\mathcal E}_b &= L - M,             &&\text{(equity of banks)}
\end{align} 
from which we obtain that the total net worth of the entire economy is equal to the nominal value of the total capital stock, as expected: 
\begin{equation}
 X = X_h + X_f + X_b = (M_h +{\mathcal E}_h)+(pK +M_f -L- {\mathcal E}_f) + (L - M - {\mathcal E}_b)= pK. 
\end{equation}

The transactions matrix in Table \ref{table} accounts for all monetary {\em flows} between sectors, expressed in units of currency per time (e.g., USD/year), with a positive entry representing an incoming flow (i.e., a source of funds) and a negative entry representing an outgoing flow (i.e., a use of funds). 
Observe that the ``current'' account for firms registers any monetary payments received or made by the productive sector to all other sectors, including itself, while the ``capital'' account shows how the net investment is funded within the firms' sectors. The behavioural assumptions underlying each of these flows are explained in detail in the next subsection.   

Finally, balance sheets are coupled with transaction flows through the flow of funds matrix in Table \ref{table}, which records the change in stocks resulting from all the monetary transactions occurring at each period of time. 

\begin{table}[htp]
{\small
\begin{center}
\begin{tabular}{|l|c|cc|c|c|}
\hline
 & Households & \multicolumn{2}{|c|}{Firms} & Banks & Row Sum  \\
\hline 
 {\bf Balance Sheets} &  & &  & &  \\
% \hline
Capital stock &  & \multicolumn{2}{|c|}{$pK$} &   & $pK$ \\
%Inventory & & \multicolumn{2}{|c|}{$+cV$} &   & $+cV$ \\
Deposits & $M_h$ & \multicolumn{2}{|c|}{$M_f$} &  $-M$&  0  \\
Loans &  & \multicolumn{2}{|c|}{$-L$} & $L$ &   0 \\
%Bills & $+B$ & & & & $-B$ & 0\\
Equities & ${\mathcal E}_h$  & \multicolumn{2}{|c|}{$-{\mathcal E}_f$}  & $-{\mathcal E}_b$ &   0  \\
\hline
Sum (net worth) & $X_h$ & \multicolumn{2}{|c|}{$X_f=0$}  & $X_b =0 $  & $X$ \\
\hline 
\hline
{\bf Transactions} & &  current & capital &    &\\
% \hline
Consumption  & $-pC_h$ & $pC$ & &    &  0 \\
%Gov Spending & & $+G$ & & & $-G$ & 0 \\
Investment  & & $pI$ & $-pI$ &   & 0  \\
%Change in Inventory & & $+c\dot{V}$ & $-c\dot{V}$ & & 0 \\
%\hline
Accounting memo [GDP] & & [$pY$]  & &  & \\
%\hline
Wages & $W$ & $-W$ & &   & 0 \\
Depreciation & & $-p\bar\delta K$  & $p\bar\delta K$ &  & \\
%Taxes & & $-T$ & & & $+T$ &  0\\
%Subsidies & & $+G_s$ & & & $-G_s$ & 0\\
Interest on deposits  & $r_M M_h$ & $r_M M_f$ & &   $-r_M M$  &   0 \\
Interest on loans  &  & $-r L$ &  & $r L$ &  0 \\
Firms' dividends  & $\Pi_d$ & $-\Pi_d$ & &    & 0 \\
Banks' dividends  & $\Pi_b$ & & &  $-\Pi_b$  & 0 \\
% Profits & $+\Pi_d$ & $-\Pi_d$ &  & &     0\\
\hline
Financial Balances & $S_h$ & $\Pi_r$ & $-pI+p\bar\delta K$ & 0  & 0 \\
%\hline
\hline
\hline
 {\bf Flow of Funds} & &  &    &    &\\
% \hline 
Change in Capital Stock & & \multicolumn{2}{|c|}{$+p\dot K$}& & $+p\dot K$\\
%Change in Inventory & & \multicolumn{2}{|c|}{$+c\dot{V}$}& & $+c\dot{V}$\\
Change in Deposits & $+\dot{M}_h$& \multicolumn{2}{|c|}{$+\dot{M}_f$}&  $-\dot{M}$&   0  \\
Change in Loans & & \multicolumn{2}{|c|}{$-\dot{L}$}   &  $+\dot{ L}$ &   0 \\
% Bills & $-\dot{B}$ & & & & $+\dot{B}$ & 0 \\
%Equities & $+e\dot{E}$ &  \multicolumn{2}{|c|}{$-e\dot{E}$} &    &   0 \\
\hline
Column sum & $S_h$ &  \multicolumn{2}{|c|}{$\Pi_r$}  &   0 & $p\dot K$\\
Change in net worth & $\dot X_h=S_h$ &\multicolumn{2}{|c|}{$\dot X_f=\Pi+\dot{p} K$}   & 0 & $\dot X$\\
\hline
\end{tabular}
\end{center}
\caption{Balance sheet, transactions, and flow of funds for a three-sector economy.}
\label{table}
}
\end{table}%

\subsection{Sector behaviour}

We now describe the behavioural assumptions that govern the evolution of the macroeconomic monetary system.     

% \vspace{0.2in}
% \noindent
% {\bf Production and Financing for Firms}
% \vspace{0.2in}

\subsubsection{Production and Financing for Firms}

Denote real economic output, measured in goods per year, by $Y$ and assume that it is related to the stock of real capital $K$ held by firms, measured in units of goods, through a constant capital-to-output ratio $\bar\nu$ according to 
\begin{equation}
\label{nu}
Y=\frac{K}{\bar\nu} \;,
\end{equation}
so that $\bar\nu$ has units of time. Assumption \eqref{nu} can be relaxed to incorporate a variable rate of capacity utilization as in \cite{GrasselliNguyenHuu2018}, but this generalization is not needed for the purposes of this paper. Capital is then assumed to evolve according to the dynamics
\begin{equation}
\label{eq:capital}
\dot{K} = I - \bar\delta K \;,
\end{equation}
where $I$ denotes gross investment by firms and $\bar\delta$, which has units of inverse of time, is a constant depreciation rate. Firms obtain funds for investment both internally and externally. Following the current account column in Table \ref{table}, we define profits before dividends as\footnote{Observe that we deviate slightly from equation (51) in \cite{GrasselliNguyenHuu2015} by accounting for the cost of depreciation in the definition profits, in line with equation (3) in \cite{BovariGiraudMcIsaac2018}, which we use as our main reference for the empirical estimates of the parameters in the model.} 
\begin{equation}
\label{profit}
\Pi=pY  -W -p\bar\delta K + \bar r_M M_f -r L,
\end{equation}
where $p$ denotes the price level for a basket of goods, measured in units of currency per good, $\bar r_M$ is the instantaneous interest rate received by firms on deposits $M_f$, and $r$ is the instantaneous interest rate paid by firms on loans $L$. 

The deposit rate $\bar r_M$ represents the return on liquid money holdings of households. 
It is the benchmark monetary rate in the real-economy block. 
By contrast, $r$ (and later $\bar r_L$) denotes the baseline funding rate at which firms and financial market participants can borrow in the absence of transaction costs or endogenous risk premia.

Observe that each quantity appearing in \eqref{profit} has units of currency per time (e.g., USD/year), and consequently so does $\Pi$. Accordingly, we define the profit ratio before dividends as the dimensionless quantity 
\begin{equation}
\label{eq:profit share}
\pi=\frac{\Pi}{pY}=1-\omega -\bar\delta\bar\nu + \bar r_M m_f -r \ell ,
\end{equation}
where
\begin{equation}
\label{state_var}
\omega=\frac{W}{pY}, \qquad m_f = \frac{M_f}{pY}, \qquad \ell=\frac{L}{pY}, 
\end{equation}
denote, respectively, the wage share, the deposit ratio for firms, and the loan ratio. Observe that, similar to $\pi$, the wage share $\omega$ is also dimensionless, whereas the ratios $m_f$ and $\ell$ are measured in units of time, since the numerator in each case has units of currency (e.g.,  $M_f$ is measured in USD) whereas the denominator $pY$ has units of currency per time (e.g., USD/year). The profit ratio $\pi$ is used as the main behavioral variable for the firms, as we assume that gross investment is given by 
\begin{equation}
    I=\kappa(\pi)Y
\end{equation}
where $\kappa(\cdot)$ is an increasing function, which for concreteness we assume to have the form 
\begin{equation}
\label{eqn:kappa}
\kappa (\pi) = \min\big(\bar\kappa_{max},\max\left(\bar\kappa_{\min},\bar\kappa_0 + \bar\kappa_1 \pi\right)\big),
\end{equation}
as adopted in \cite{BovariGiraudMcIsaac2018}, where the constants $\bar\kappa_\cdot$ are all dimensionless. 

Next we assume that firms distribute dividends to households according to\footnote{This is also a departure from the baseline model in \cite{GrasselliNguyenHuu2018}, which we adopt in order to be able to use the parameters estimated in \cite{BovariGiraudMcIsaac2018}.}
\begin{equation}
    \Pi_d = \Delta(\pi) pY,
\end{equation}
where, for concreteness, we take
\begin{equation}
\Delta (\pi) = \min\left(\bar\Delta_{max},\max\left(\bar\Delta_{min},\bar\Delta_0 + \bar\Delta_1 \pi\right)\right), \label{eqn:div_function}
\end{equation}
as proposed in \cite{BovariGiraudMcIsaac2018}, where the constants $\bar\Delta_\cdot$ are all dimensionless. 
Accordingly, internal funds for firms are given by the retained profits 
\begin{equation}
    \Pi_r = \Pi - \Pi_d. 
\end{equation}

We further assume that the external funds available to firms in this model consist solely of loans from the banking sector. As can be seen in Table \ref{table}, whenever nominal net investment $p(I-\bar\delta K)$ differs from retained profits $\Pi_r$, firms change their net borrowing, that is,  
\begin{equation}
\label{debt_keen}
\dot{L}-\dot M_f=p(I-\bar\delta K)-\Pi_r.
\end{equation}
The exact partition of the difference between net investment and retained profits among a change in loans $\dot L$ and a change in deposits $\dot M_f$ depends on the liquidity preferences of the firms. We adopt here the following generalization of the formulation in \cite{GrasselliNguyenHuu2015}: 
\begin{align}
\dot M_f &= pY - W - p\bar\delta K + \bar r_M M_f - (1-\bar\zeta) p(I-\bar\delta K) - \bar\kappa_L L -\Pi_d + F \label{deposits} \\
\dot L & = \bar\zeta p(I-\bar\delta K) + r L - \bar\kappa_L L + F \label{loans}
\end{align}
where $\bar\zeta \in [0,1]$ is the proportion of net investment that is financed by new loans, $\bar\kappa_L\in [0,1]$ is the instantaneous rate of loan repayment, and $F$ is a speculative flow. In the absence of $F$, the rationale for this formulation is straightforward: proceeds from sales $pY$ first increase deposits $M_f$, which are then used to pay wages $W$ to workers and to transfer the depreciation amount $p\bar\delta K$ to the capital account. The remainder, with the addition of earned interest $\bar r_M M_f$, is then used to pay for a fraction $(1-\bar\zeta)$ of net investment $p(I-\bar\delta K)$, as well as to repay an amount $\bar\kappa_L L$ of outstanding loans $L$ and dividends to households. Accordingly, the loan amount $L$ increases by the remaining portion of net investment $\bar\zeta p(I-\bar\delta K)$ and the interest charges $r L$, minus the repayment amount $\bar\kappa_L L$. 

The term $F$ in \eqref{deposits}-\eqref{loans} is one of the main modeling innovations in \cite{GrasselliNguyenHuu2015}, which we explore further in this paper. It corresponds to additional borrowing by firms with the purpose of purchasing {\em existing} assets (e.g., real estate, financial instruments, art work) from {\em other firms}. Accordingly, being transactions within a sector, these purchases are not represented in the matrix shown in Table \ref{table}, which only records transactions between sectors (or, in the case of investment, between the current and capital accounts of the firms' sector). Moreover, these debt-financed purchases increase the loans of the buying firm by exactly the same amount as the deposits of the selling firms, as expressed in \eqref{deposits} and \eqref{loans}. In \cite{GrasselliNguyenHuu2015} the dynamics of $F$ follows what had been proposed originally in \cite{Keen1995}, namely with instantaneous increments $\dot F$ given as a varying fraction of nominal output, where here we adopt the simpler formulation  
\begin{equation}
\label{flow_keen}
F=\Psi(g(\pi)+i(\omega))pY,
\end{equation}
% \begin{equation}
% \label{flow_keen}
% \dot F=\Psi(g(\pi)+i(\omega))pY,
% \end{equation}
where $\Psi(\cdot)$ is an increasing function of the observed growth rate $g(\pi)+i(\omega)$ of nominal economic output $pY$, to be defined in \eqref{eq:growth} and \eqref{inflation} below. For concreteness, we take 
\begin{equation}
\label{eq:Psi}
\Psi (x) = \min\left(\bar\Psi_{\max},\max\left(\bar\Psi_{\min},\bar\Psi_0 + \bar\Psi_1 x\right)\right),
\end{equation}
by analogy with the investment and dividend functions in \eqref{eqn:kappa} and \eqref{eqn:div_function}. 
Observe that, unlike in \eqref{eqn:kappa} and \eqref{eqn:div_function} where all constants are dimensionless, 
the constants $\Psi_1$ in \eqref{eq:Psi} have units of time, since the growth rate $g+i$ has units of inverse of time.

% the same formulation as in \cite{GrasselliNguyenHuu2015}, namely
% \begin{equation}
% \label{eq:Psi}
%     \Psi (x) = \bar\psi_0\left(e^{\bar\psi_2 (x-\bar\psi_1)}-1\right)
% \end{equation}

% \vspace{0.2in}
% \noindent
% {\bf The Labour Market and Inflation}
% \vspace{0.2in}

\subsubsection{The Labour Market and Inflation}

We assume that firms follow a minimally rational behaviour by hiring the required amount of labour $E$ at full capacity, that is 
\begin{equation}
    E = \frac{Y}{a}=\frac{K}{a\bar\nu}
\end{equation}
where $a$ represents average labour productivity, measured in units of goods per year per worker and assumed to grow at a constant rate 
\begin{equation}
    \bar\alpha= \frac{\dot{a}}{a}.  \label{eqn:labour}
\end{equation}
Accordingly, we define the endogenously determined employment rate as the dimensionless quantity 
\begin{equation}
\label{eqn:lambda}
     e = \frac{E}{N}
\end{equation}
for a total workforce $N$, which is also assumed to grow at a constant rate   
\begin{equation}
    \bar\beta = \frac{\dot{N}}{N}. \label{eqn:population}
\end{equation}
Notice that both \eqref{eqn:labour} and \eqref{eqn:population} can be relaxed, for example by adopting a logistic growth for population as in \cite{BovariGiraudMcIsaac2018}, but we will not pursue these generalizations here.    

Next we follow \cite{Desai1973} and \cite{Keen2013} and consider a price-wage dynamics of the form
\begin{align}
\label{wage-1}
\frac{\dot{\mathrm{w}}}{\mathrm{w}}&=\Phi( e)+\bar\gamma i(\omega)\;, \\
i(\omega)&:= \frac{\dot p}{p} =-\bar\eta_p\left[1-\bar\xi\frac{\mathrm{w}}{ap}\right] =\bar\eta_p(\bar\xi\omega-1)
\label{inflation}
\end{align}
for constants $\bar\gamma\in[0,1]$, $\bar\eta_p>0$ and $\bar\xi\geq 1$. 
The first equation states that workers bargain for wages based on the current state of the labour market through a Philips curve $\Phi(\cdot)$ while also take into account the observed inflation rate. 
The dimensionless constant $\bar\gamma$ represents the degree of money illusion, 
with $\bar\gamma=1$ corresponding to the case where workers fully incorporate inflation in their bargaining, 
which is equivalent to the wage bargaining in real terms as assumed in \cite{Keen1995}.
The second equation assumes that the long-run equilibrium price is given by a dimensionless markup $\bar\xi$ times unit labor cost $\mathrm{w}/a$, 
whereas observe prices converge to this through a lagged adjustment of exponential form with a relaxation time $1/\bar\eta_p$. For concreteness, we adopt the linear Philips curve specification in \cite{BovariGiraudMcIsaac2018}:
\begin{equation}
\label{eq:philips}
    \Phi (e) = \bar\Phi_0 + \bar\Phi_1 e,
\end{equation}
for constants $\bar\Phi_0$ and $\bar\Phi_1$ with units of inverse of time. 

% \vspace{0.2in}
% \noindent
% {\bf Banks and Households}
% \vspace{0.2in}

\subsubsection{Banks and Households}

We assume that banks satisfy the demands of firms for loans according to \eqref{loans}, while adjusting the effective interest rate on loans $r$ as described in the Section \ref{sec:stochastic}. Furthermore, bank profits of the form
\begin{equation}
\label{eq:bank_profit}
\Pi_b = rL - \bar r_M M
\end{equation}
are fully transferred to households, so that bank equity ${\mathcal E}_b$ remains constant. This assumption can be modified, for example by considering a constant equity ratio instead as in \cite{GrasselliLipton2019a}, but this is not needed for the purposes of this paper.  

Finally, household consumption plays the role of an accommodating variable in the model, in the sense that it must satisfy
\begin{equation}
\label{consumption_keen}
C_h = C = Y-I=(1-\kappa(\pi))Y,
\end{equation} 
since there are no inventories in the model, which means that total output is implicitly assumed to be sold either as investment or consumption. In other words, 
consumption is fully determined by the investment decisions of firms. This 
shortcoming of the model is highlighted in \citet{NguyenHuuPottier2017} and addressed in \cite{GrasselliNguyenHuu2018}, where inventories are included in the model and consumption and investment are 
independently specified. For this paper, however, we adopt the original specification \eqref{consumption_keen}. 

The overall stock--flow consistency of the model can now be verified by checking that the financial balances of all sectors add up to zero as shown in Table \ref{table}, namely 
\begin{align*}
    S_h + \Pi_r - p(I-\bar\delta K)  =& \,\, W + \bar r_M M + \Pi_d + \Pi_b - pC_h + pC + pI\\
    & - W - p\bar\delta K + \bar r_M M_f - rL - \Pi_d -p I + p\bar\delta K  \\
    &+ rL - \bar r_M M - \Pi_b = 0
\end{align*}

\subsection{Dynamical system}

Defining the speculative flow as $f=F/(pY)$ and using the previous definitions \eqref{state_var} and \eqref{eqn:lambda} it then follows that the model corresponds to the four-dimensional system 
\begin{equation}
\label{keen_ponzi}
\hspace{-0.1cm}
\left\{
\begin{array}{ll}
\dot\omega &= \, \omega\left[\Phi( e)-\bar\alpha-(1-\bar\gamma)i(\omega)\right] \\
\dot e &= \, e\left[g(\pi)-\bar\alpha-\bar\beta\right]  \\
\dot m_f &= \, \pi - (1-\bar\zeta)\bar\nu g(\pi) + (r-\bar\kappa_L) \ell -\Delta(\pi)+ f - \left[g(\pi)+i(\omega)\right]m_f\\
\dot \ell &=\,  \bar\zeta(\kappa(\pi)-\bar\delta\bar\nu)+(r-\bar\kappa_L)\ell + f -\left[g(\pi)+i(\omega)\right]\ell \\
% \dot f &= \, \Psi\left(g(\pi)+i(\omega)\right)-\left[g(\pi)+i(\omega)\right] f
\end{array}\right.
\end{equation}
where 
\begin{equation}
\label{eq:growth}
    g(\pi) := \frac{\dot Y}{Y} = \frac{\kappa(\pi)}{\bar\nu}-\bar\delta 
\end{equation}
is the growth rate of real output $Y$ and $f = \Psi(g(\pi)+i(\omega))$ for $\Psi$ defined in \eqref{eq:Psi}, whereas $\pi$, $\kappa(\pi)$, $\Delta(\pi)$, $i(\omega)$ and $\Phi(e)$ are given respectively by \eqref{eq:profit share}, \eqref{eqn:kappa}, \eqref{eqn:div_function}, \eqref{inflation} and \eqref{eq:philips}.
% with $r =r_t$ being a stochastic process specified in equation \eqref{r_stochastic} below. 

\section{The financial market}
\label{sec:stochastic}

There are competing views on asset price determination. 
While standard theory emphasizes fundamentals \citep{fama1970efficient,lucas1978asset}, alternative approaches stress the role of credit conditions and balance-sheet dynamics \citep{gabaix2021search,bouchaud2022inelastic}. 
In line with this perspective, we posit that speculative credit availability influences asset prices.

For example, the deterministic version of the model proposed in \cite{CorsiSornette2014} states that 
\begin{equation*}
    F=\frac{dM_F}{dt} = \bar c_1 S M_F \,, \quad 
\frac{dS}{dt}= \bar c_2 SM_F ,
\end{equation*}
where $M_F$ is the quantity of credit used for purchasing financial assets. This model is itself motivated by the Quantity Theory of Credit proposed in 
\cite{Werner2012}, according to which the standard equation 
\begin{equation}
\label{quantity}
MV = pY
\end{equation}
relating the money supply $M$, the velocity of money $V$, the price level $p$, and real output $Y$ needs to be modified to take financial transactions into account. In other words, equation 
\eqref{quantity} needs to be replaced by the pair of equations  
\begin{align}
\label{quantity_modified}
M_RV_R&=pY \\
M_FV_F &= S Q_F
\end{align}
where the subscript $R$ refers to transactions that are part of GDP, namely transactions on newly produced goods and services, and the subscript $F$ refers to transactions on 
existing financial assets. Assuming that the velocity $V_F$ is approximately constant, it follows that an increase in the amount of money $M_F$ used in financial transactions leads to an 
increase in $SQ_F$, that is, the product of the price level $S$ of existing financial assets and the number of trades $Q_F$ in them.

We draw inspiration from above by linking the speculative credit flow $F$ to asset price dynamics. 
The flow $F$ finances transactions in an existing financial asset whose price is denoted by $S$. 
The variable $S$ serves as an auxiliary financial state variable capturing speculative valuation conditions. 
Although it may be interpreted as a proxy for aggregate market capitalization, it is not explicitly embedded in the stock--flow consistent accounting matrix. 
The real-economy block remains fully stock--flow consistent, while $S$ influences credit conditions through the feedback mechanism described below.

For clarity, we distinguish between the benchmark monetary rate $\bar r_M$, which represents the return on household deposits, and the baseline funding rate $\bar r_L$, which reflects borrowing conditions in financial markets in the absence of endogenous credit spreads.

% The speculative credit flow $F$ finances transactions in an existing financial asset whose price dynamics are governed by equation~\eqref{eq:S}. 
% The price variable $S$ is introduced as an auxiliary financial state variable capturing speculative valuation conditions. 
% While $S$ may be interpreted as a proxy for aggregate market capitalization or equity valuation, it is not explicitly incorporated into the stock--flow consistent accounting matrix. 
% The real-economy block remains fully stock--flow consistent, whereas $S$ operates as an external financial indicator feeding back into credit conditions (see below).
The dynamics of $S$ are given by:
\begin{equation}
\label{eq:S}
dS_t = \bar r_L S_t dt + \bar\sigma S_t dW_t - \bar J^+ S_{t}\left(dN^+_t-\lambda^+ dt\right) + \bar J^- S_{t}\left(dN^-_t-\lambda^- dt\right),
\end{equation}
where $W_t$ is a standard Brownian motion and $N^+_t$ and $N^-_t$ are nonhomogeneous Poisson processes with time-dependent intensities $\lambda^+ = \bar\lambda^+ f^+$ and 
$\lambda^- = \bar\lambda^- f^-$, where $f = F/(pY)$ denotes the normalized speculative flow, and $f^+ = \max(f,0)$, $f^- = \max(-f,0)$ its positive and negative parts.
We assume $0<\bar J^+<1$ and $\bar J^->0$, so that downward jumps correspond to proportional losses  strictly smaller than 100\%, while upward jumps represent proportional gains.
The parameters $\bar\lambda^\pm$ have dimension of inverse time, ensuring that $\lambda^\pm$ are valid jump intensities.
Equation~\eqref{eq:S} is written in compensated form, so that jump terms enter as $(dN_t^\pm - \lambda^\pm dt)$ and the discounted price process is a local martingale.

In the absence of speculative flows ($f=0$), equation~\eqref{eq:S} reduces to a geometric Brownian motion with drift $\bar r_L$ and volatility $\bar\sigma$, so that the discounted price $\widetilde S_t = e^{-\bar r_L t} S_t$ is a martingale.   

When speculative inflows $f^+>0$ are directed toward asset purchases, the model increases the intensity $\lambda^+$ of downward jumps of proportional size $\bar J^+$. 
These jumps capture the buildup of systemic fragility associated with leveraged asset purchases: as credit-financed demand expands, balance-sheet exposure rises and the probability of abrupt corrections increases. 
The compensator term $\bar J^+ \lambda^+ dt$ offsets the expected jump loss, preserving the local martingale property of the discounted price.

Conversely, when speculative outflows $f^->0$ correspond to asset sales aimed at deleveraging, the intensity $\lambda^-$ of upward jumps of proportional size $\bar J^-$ increases. 
These upward jumps represent short-covering or policy-driven rebounds that may occur during deleveraging episodes. 
The compensator $-\bar J^- \lambda^- dt$ offsets the expected upward jump component, again preserving the local martingale property.

% The motivation for this formulation is that, absent any flow of speculative finance, the stock price follows a standard geometric Brownian motion with volatility $\bar\sigma $ and drift $\bar r_L$, so that the discounted stock price 
% \begin{equation}
%     \widetilde S_t = e^{-\bar r_L t}S_t
% \end{equation} 
% is a martingale.

% The presence of a positive speculative flow $f^+$ directed at purchasing assets in the financial market creates an additional risk of downward jumps of proportional size $\bar J^+$ occurring with a time-dependent intensity $\lambda^+$, so that the higher the speculative flow $f^+$ the more frequent the jumps. 
% Investors are compensated for this additional risk by a positive excess drift $\bar J^+\lambda^+ dt$, so that the discounted stock price remains a local martingale. Conversely, a negative speculative flow $-f^-$ represents sales of stocks in order to repay loans, with the corresponding downward drift $-\bar J^-\lambda^- dt$, which is compensated by the presence of occasional upward jumps of proportional size $\bar J^-$ occurring with a time-dependent intensity $\lambda^-$, so that the discounted stock price remains a local martingale as before. 

To transmit financial market conditions to the real economy, we introduce a trend indicator based on the logarithmic return of the asset, following \cite{majewski2020co}.
To do so, consider first the differential equation for the logarithm of the price $S_t$, namely 
\begin{align}
\label{eq:logS}
d\log S_t = &\left(\bar r_L -\frac{1}{2}\bar\sigma^2 +\bar J^+\lambda^+-\bar J^-\lambda^-\right)dt + \bar\sigma dW_t \\
& \qquad + \log\left(1-\bar J^+\right) dN^+_t +\log\left(1+\bar J^-\right) dN^-_t ,
\end{align}
which we can rewrite as 
\begin{align}
\label{eq:logS rewrite}
d\log S_t  = &\left[\bar r_L -\frac{1}{2}\bar\sigma^2 
+\left(\log(1-\bar J^+)+\bar J^+\right)\lambda^++\left(\log(1+\bar J^-)-\bar J^-\right)\lambda^-\right]dt \\ 
& + \bar\sigma dW_t + \log\left(1-\bar J^+\right)\left(dN^+_t-\lambda^+ dt\right)
+ \log\left(1+\bar J^-\right)\left(dN^-_t-\lambda^- dt\right), \nonumber
\end{align}
We define a mean-reverting trend indicator $\mu_t$ that tracks the predictable component of logarithmic returns while preserving the same stochastic shocks:
\begin{align}
\label{eq:trend}
    d\mu_t = & \bar\eta_\mu 
\left[\bar r_L -\frac{1}{2}\bar\sigma^2 
+\left(\log(1-\bar J^+)+\bar J^+\right)\lambda^++\left(\log(1+\bar J^-)-\bar J^-\right)\lambda^-  - \mu_t\right] dt \\
& + \bar\sigma dW_t + \log\left(1-\bar J^+\right)\left(dN^+_t-\lambda^+ dt\right)
+ \log\left(1+\bar J^-\right)\left(dN^-_t-\lambda^- dt\right) \nonumber
\end{align}
By construction, the trend indicator $\mu_t$ inherits the same Brownian and jump shocks as the asset price process. 
This choice reflects the assumption that banks and financial intermediaries observe and react to realized market fluctuations rather than to a smoothed or filtered signal. 
The lending spread therefore responds directly to financial volatility and crash risk.
Consequently, $(S_t,\mu_t)$ defines a two-dimensional Markov jump--diffusion process.
The parameter $\bar\eta_\mu>0$ governs the speed of adjustment toward the conditional drift of log-returns, while the diffusion and jump components are inherited unchanged from the price process.

We assume that the effective lending rate $r_t$ charged by banks is a decreasing function of the trend indicator $\mu_t$:
\begin{equation}
\label{r_stochastic}
r_t = \rho (\mu_t) = \min\left(\bar r_{\max}, \bar r_L + \bar\rho_1 e^{-\bar\rho_2 (\mu_t - \bar r_L)} \right).
\end{equation}
That is to say, banks always charge a premium 
\begin{equation}
\label{premium}
\mathcal{P}_t = \bar\rho_1 e^{-\bar\rho_2 (\mu_t - \bar r_L)}
\end{equation}
above $\bar r_L$ when extending loans to firms. 
In the asset price dynamics, $\bar r_L$ represents the baseline financing rate for leveraged positions and should not be interpreted as a risk-free arbitrage rate.

The premium $\mathcal{P}_t$ is high when the trend indicator is low, reflecting tightening credit conditions during market stress, and declines as $\mu_t$ increases, approaching zero in periods of sustained asset price growth. 
This specification captures procyclical lending behavior. 
The upper bound $\bar r_{\max}$ ensures boundedness of the rate and may also be interpreted as reflecting policy intervention when private credit markets become severely impaired.

% The premium $X$ is close to the parameter $\bar\rho_1$ when the trend tracker is close to the lower bound rate $\bar r_L$.
% In periods when the trend tracker is high, the premium decreases, approaching zero asymptotically in periods of market exuberance. 
% Conversely, when the trend tracker is low, the premium charged on loans increases rapidly, making it much more difficult for firms to borrow in periods of market turbulence. We assume a maximum interest rate $\bar r_{\max}$ primarily for mathematical convenience, but also reflecting policy interventions when the effective lending rate is so high that private credit market is deemed to be frozen.  

Observe that 
\begin{equation}
\label{logX}
    \log \mathcal{P}_t = -\bar\rho_2 \mu_t + \bar\rho_2\bar r_L+ \log \bar\rho_1,
\end{equation}
so we can readily derive properties of the premium process $X_t$ from those obtained for the trend indicator $\mu_t$.

Notice that the absolute level of $S_t$ is economically irrelevant and can be re-normalized without affecting the macro-financial dynamics, as only its relative movements influence credit conditions.

The full macro-financial system is therefore described by the coupled stochastic dynamics of 
$(\omega_t, e_t, b_t, f_t, S_t, \mu_t)$, 
where the real-economy variables satisfy deterministic differential equations driven by the stochastic lending rate $r_t$, while $(S_t,\mu_t)$ evolve as a compensated jump--diffusion process. 

Under the stated parameter restrictions, the coupled process defines a time-inhomogeneous Markov process on the admissible state space
\[
\mathcal{D} = \{(\omega,e,b,f,S,\mu) : \omega \ge 0,\ e \ge 0,\ S>0,\ (b,f,\mu)\in\mathbb{R}\}.
\]
The restriction $0<\bar J^+<1$ ensures strict positivity of the asset price.

Existence and uniqueness of strong solutions follow from standard arguments for stochastic differential equations with locally Lipschitz coefficients and state-dependent jump intensities, and are established in Appendix~\ref{sec:existence}. 
Under mild growth conditions on the macroeconomic feedback functions, the system does not exhibit finite-time explosion. 

Consequently, the model defines a well-posed stochastic dynamical system suitable for analytical and numerical investigation.

\subsection{Special cases and fundamental properties}

To recap, we connect the real economy model \eqref{keen_ponzi} to the stock market model \eqref{eq:S} through two channels: (1) the intensities of jumps $\lambda^\pm = \bar\lambda^\pm f^\pm$, which add boom-and-bust dynamics to the stock price in proportion to the speculative flow $f$ and (2) the effective interest rate on loans in \eqref{r_stochastic}, which connects the risk premium on loans provided to firms with the trend indicator for the log returns of stock prices. 

Accordingly, the real economy model becomes independent from the stock market when $\bar\rho_1 = 0$, in which case  $r = \bar r_L$ and the stock price has no influence in \eqref{keen_ponzi}. Conversely, the stock market becomes independent from the real economy when $\bar\lambda^\pm = 0$, in which case the speculative flow $f$ has no influence in \eqref{eq:S}.

\subsubsection{The deterministic case}
Before introducing stochasticity, one can already analyze the determinisic extended model
with $\bar\sigma=\bar\lambda^\pm=0$.
In this case, the stock price process becomes the growing exponential $S_t = S_0 e^{\bar r_L t}$ and the trend indicator satisfies 
\[
\lim_{t\to +\infty} \mu_t = \bar r_L
\]
and, as a consequence, 
$
\lim_{t\to +\infty} r_t = \bar r_L + \bar\rho_1
$.

% This shows some internal consistency of the stock price dynamics at the macroeconomic scale:
% as the financial market shows no sign of risk, absence of arbitrage must
% align the lending rate $r_t$ with the return rate of the asset.
% Asymptotically, it becomes $\bar r_M$ which writes as a risk-free rate.
% The above remark is of course crucially relying on $\bar r_M\ge 0$, which appears as a natural assumption.

\subsubsection{The case $\bar\lambda^\pm =0$}
\label{sec:brownian case}

In this case, the stochastic differential equation \eqref{eq:S} has a solution given by %\footnote{see Appendix \ref{sec:existence} for construction}
\begin{equation}
\label{eq:S_solved}
S_t = S_0 e^{ \left(\bar r_L - \frac{1}{2}\bar\sigma^2\right)t + \bar\sigma W_t } \;, \quad t\ge 0 \; ,
\end{equation}
whereas the trend indicator satisfies 
\begin{equation}
\label{eq:trend_nojump}
    d\mu_t =  \bar\eta_\mu 
\left(\bar r_L - \frac{1}{2}\bar\sigma^2 -\mu_t\right) dt + \bar\sigma dW_t,
\end{equation}
which defines an Ornstein–Uhlenbeck (OU) process with long-term mean 
\begin{equation}
\label{mu0}
    \bar\mu_0 = \bar r_L - \frac{1}{2}\bar\sigma^2,
\end{equation}
mean-reversion speed $\bar\eta_\mu$ and volatility $\bar\sigma$. Consequently, in this case, the trend indicator is asymptotically stationary with a Gaussian distribution with mean 
$\bar\mu_0$ and variance $\frac{\bar\sigma^2}{2\bar\eta_\mu}$. 
% where $W$ is a Brownian motion for a probability $\P^W$ on a properly given space, see Appendix \ref{sec:existence}.
% For $\bar r_M\ge 0$, it is a continuous positive submartingale, and the discounted
% process $\tilde{S} = exp(-\bar r_M t)S_t$ is a $\P^W$-martingale.
Accordingly, the premium process $X_t$ defined in 
\eqref{premium} is asymptotically log-normal with parameters $\left(\frac{\bar\rho_2\bar\sigma^2}{2},\frac{\bar\rho_2^2\bar\sigma^2}{2\bar\eta_\mu}\right)$, so that its asymptotic mean is equal to 
\begin{equation}
    \bar X_{0} = \bar\rho_1 e^{\frac{\bar\rho_2\bar\sigma^2}{2}+\frac{\bar\rho_2^2\bar\sigma^2}{4\bar\eta_\mu}} > \bar\rho_1,
\end{equation}
meaning that, on average, the asymptotic effective rate $r_t$ remains above the base level $\bar r_L + \bar\rho_1$ obtained when $\bar\sigma = 0$, and this difference is increasing in $\bar\rho_2$ and $\bar\sigma$, but decreasing in $\bar\eta_\mu$.

\subsubsection{The case $f^+ = \bar f >0$}
\label{f_positive}

In this case $f^- = 0$ and the trend indicator $\mu_t$ defined in \eqref{eq:trend} becomes an extended Ornstein–Uhlenbeck (OU) proces with Poisson jumps of constant size $\log(1-\bar J^+)$ and constant intensity $\bar\lambda^+\bar f>0$, which has an asymptotic stationary distribution (see, for example, \cite{Applebaum04}) consisting of the convolution of a Gaussian and compound Poisson distributions whose characteristic function $\phi (u)$ is given by 
\begin{equation}
    \log \phi (u) = i u\bar\mu^+ -\frac{\bar\sigma^2 u^2}{4\bar\eta_mu}+\bar\lambda^+\bar f \int_0^\infty \left( e^{i u \tilde J^+e^{-\bar\eta_u s}}-1-i u \tilde J^+ e^{-\bar\eta_\mu s}\right)ds,
\end{equation}
where 
\begin{equation}
\label{mu_plus}
    \bar\mu^+ = \bar r_L -\frac{1}{2}\bar\sigma^2 
+\left(\tilde J^++\bar J^+\right)\bar\lambda^+\bar f,
\end{equation}
and 
\begin{equation}
    \tilde J^+ = \log (1-\bar J^+).
\end{equation}
From it, one can deduce that the mean of the stationary distribution is given by $\bar \mu^+$, whereas its variance is given by 
\begin{equation}
\label{var_infty}
\text{Var}(\mu)_\infty = \frac{\bar\sigma^2}{2\bar\eta_\mu} + \frac{\bar\lambda^+\bar f (\tilde J^+)^2}{2\bar\eta_\mu}.
\end{equation}
That is to say, compared to the previous case with $\bar\lambda^\pm = 0$, the case with constant intensity for downward jumps reduces the mean of the stationary distribution of the trend indicator to $\bar\mu^+ < \bar\mu_0$, since $\left(\log(1-\bar J^+)+\bar J\right)\bar\lambda^+\bar f <0$, and increases its variance. Moreover, the skewness of the stationary distribution of the trend indicator is 
\begin{equation}
\label{skew}
    \text{Skewness}(\mu)_\infty = -\frac{2^{3/2}\bar\lambda^+\bar f (\tilde J^+)^3}{3\left(\bar\sigma^2+\bar\lambda^+\bar f (\tilde J^+)^2\right)^{3/2}} > 0,
\end{equation}
as the jumps of size $\log(1-\bar J^+)<0$ make it spend more time below its mean. Finally, its kurtosis is given by 
\begin{equation}
\label{kurtosis}
    \text{Kurtosis}(\mu)_\infty = 3+ \frac{4\bar\eta_\mu\bar\lambda^+\bar f(\tilde J^+)^4}{\left(\bar\sigma^2+\bar\lambda^+\bar f (\tilde J^+)^2\right)^2} > 3,
\end{equation}
meaning that the jumps cause it to be fat-tailed. All of these properties translate easily to the asymptotic stationary distribution of the process $\log X_t$ in \eqref{logX} and, with the necessary adjustment, to the corresponding lending rate in \eqref{r_stochastic}. 

\subsubsection{The case $f^- = \bar f >0$}
\label{f_negative}

In this case $f^+ = 0$ and the trend indicator $\mu_t$ defined in \eqref{eq:trend} becomes an extended Ornstein–Uhlenbeck (OU) process with Poisson jumps of constant size $\log(1+\bar J^-)$ and constant intensity $\bar\lambda^-\bar f>0$. The same calculations as before lead to an asymptotic stationary distribution with mean 
\begin{equation}
    \bar\mu^- = \bar r_L -\frac{1}{2}\bar\sigma^2 
+\left(\tilde J^--\bar J^-\right)\bar\lambda^-\bar f,
\end{equation}
where
\begin{equation}
    \tilde J^- = \log (1+\bar J^-).
\end{equation}
The variance, skewness and kurtosis of the asymptotic stationary distribution are given by the same expressions as in \eqref{var_infty}, \eqref{skew} and \eqref{kurtosis}, but with $\bar\lambda^+$ replaced with $\bar\lambda^-$ and $\tilde J^+$ replaced with $\tilde J^-$. 

Compared to the case with $\bar\lambda^\pm = 0$, we see that the case with constant intensity for upward jumps also reduces the mean of the stationary distribution of the trend indicator to $\bar\mu^- < \bar\mu_0$, since $\left(\log(1+\bar J^-)-\bar J^-\right)\bar\lambda^-\bar f <0$, and increases its variance. Moreover, the skewness is now negative, as the jumps of size $\log (1+\bar J^+)>0$ make the trend indicator spend more time above its mean, whereas the distribution is also fat-tailed, as the kurtosis is greater than 3.

\section{Numerical Simulations}
\label{sec:numerics}

The exogenously fixed parameters in the model are given in Table \ref{parameters}, where we specify the theoretical range of each parameter, as well as the baseline value used in the simulations, listed in the order in which they appear in the text. The base values for the parameters $\bar\nu$, $\bar\delta$, $\bar\kappa_{\cdot}$, $\bar\alpha$ and $\bar\Phi_{\cdot}$ are taken from \cite{BovariGiraudMcIsaac2018}, whereas those for $\bar\Delta_{\cdot}$, $\bar\eta_p$ and $\bar\xi$ are taken from the update provided in \cite{bovari2018debt} (which otherwise uses the same values as \cite{BovariGiraudMcIsaac2018}). All other base values were chosen by us. 

Unless otherwise specified, parameter values are chosen so as to satisfy the theoretical restrictions ensuring global existence and non-explosion established in Section~\ref{sec:existence}. 
The numerical experiments therefore explore the dynamic implications of the model within the mathematically admissible region of the parameter space.

\begin{table}[htbp]
\renewcommand*{\arraystretch}{1.2} %% a little more vertical space
\tiny
\caption{Model parameters}
\centering
\begin{tabular}{lllll} 
\hline
Symbol & Range & Base Value &  Description & Equation \\
\hline
$\bar\nu$ & $(0,\infty)$ & 2.7   &  Capital to output ratio & \eqref{nu} \\
$\bar\delta$ & $[0,\infty)$ & 0.04 & Depreciation rate of capital & \eqref{eq:capital} \\
$\bar r_M $ & $[0,\infty)$ & 0.01 & Interest on deposits & \eqref{profit} \\
$\bar\kappa_{min}$ & $[-1,0]$ & 0 & Investment  function minimum   & \eqref{eqn:kappa}\\
$\bar\kappa_{max}$ & $[0,1]$ & 0.3 & Investment  function maximum   & \eqref{eqn:kappa}\\
$\bar\kappa_0$ & $\mathbb{R}$ & 0.0318 & Investment function intercept & \eqref{eqn:kappa}\\ 
$\bar\kappa_1$ & $\mathbb{R}$ & 0.575 & Investment function slope & \eqref{eqn:kappa}\\
$\bar\Delta_{min}$ & $[-1,0]$ & 0 & Dividend  function minimum & \eqref{eqn:div_function} \\
$\bar\Delta_{max}$ & $[0,1]$ & 0.3 & Dividend  function maximum & \eqref{eqn:div_function} \\
$\bar\Delta_0$ & $\mathbb{R}$ & -0.078 & Dividend function intercept & \eqref{eqn:div_function} \\
$\bar\Delta_1$ & $[0,\infty)$ & 0.553 & Dividend  function slope 
& \eqref{eqn:div_function} \\
$\bar\zeta$ & $[0,1]$ & 0.8 & Proportion of net investment financed by new loans 
& \eqref{deposits} \\
$\bar\kappa_L$ & $[0,1]$ & 0.02 & Instantaneous rate of loan repayment  
& \eqref{deposits} \\
$\bar\Psi_{\min}$ & $[-1,0]$ & -0.15 & Speculative flow minimum   
& \eqref{eq:Psi} \\
$\bar\Psi_{\max}$ & $[0,1]$ & 0.3 & Speculative flow maximum   
& \eqref{eq:Psi} \\
$\bar\Psi_0$ & $\mathbb{R}$ & -0.075 & Speculative flow intercept   
& \eqref{eq:Psi} \\
$\bar\Psi_1$ & $[0,\infty)$ & 3.75 & Speculative flow slope    
& \eqref{eq:Psi} \\
$\bar\alpha$ & $\mathbb{R}$ & 0.02 & Growth rate of productivity & \eqref{eqn:labour} \\ 
$\bar\beta$ & $\mathbb{R}$ & 0.02 & Growth rate of workforce & \eqref{eqn:population} \\ 
$\bar\gamma$ & $\mathbb{R}$ & 0.9 & Effect of inflation on wages & \eqref{wage-1} \\
$\bar\eta_p$ & $(0,\infty)$ & 0.192 & Speed of adjustment of price index & \eqref{inflation} \\
$\bar\xi$ & $[1,\infty)$& 1.875 & Price markup  & \eqref{inflation} \\
$\bar\Phi_0$ & $\mathbb{R}$ & -0.292 & Phillips curve intercept & \eqref{eq:philips}  \\
$\bar\Phi_1$ & $\mathbb{R}$ & 0.469 & Phillips curve slope  & \eqref{eq:philips} \\
$\bar r_L$ & $[0,\bar r_{\max}]$ &0.02 & Long term interest rate & \eqref{eq:S} \\
$\bar \sigma$ & $[0,\infty)$ & 0.1 & Volatility of stock price & \eqref{eq:S} \\
$\bar J^+$ & $[0,1]$ & 0.1 & Proportional downward jump in stock price & \eqref{eq:S} \\
$\bar J^-$ & $[0,1]$ & 0.1 & Proportional upward jump in stock price & \eqref{eq:S} \\
$\bar \lambda^+$ & $[0,\infty)$ & 1 & Intensity parameter for downward jump in stock price & \eqref{eq:S} \\
$\bar \lambda^-$ & $[0,\infty)$ & 1 & Intensity parameter for upward jump stock price jump & \eqref{eq:S} \\
$\bar \eta_\mu$ & $[0,\infty)$ & 0.5 & Speed of adjustment of market trend & \eqref{r_stochastic} \\
$\bar r_{\max}$ & $[0,\infty)$ &0.2 & Maximum interest rate & \eqref{r_stochastic} \\
$\bar\rho_1$ & $[0,\infty)$ &0.01 & Parameter in interest rate function  & \eqref{r_stochastic} \\
$\bar\rho_2$ & $[0,\infty)$ &5 & Parameter in interest rate function  & \eqref{r_stochastic} \\
\hline
\end{tabular}
\label{parameters}
\end{table}

All simulations are implemented in Julia (v1.10+) \citep{bezanson2017julia}, using the 
package
\texttt{DifferentialEquations.jl} \citep{rackauckas2017differentialequations}. 
The model is formulated as a nonlinear system of stochastic differential equations with state-dependent jumps (Jump-SDE). Continuous dynamics are discretized using strong order-1 stochastic Runge–Kutta scheme (\texttt{SRIW1}), with automatic fallback to 
Stability-optimized adaptive strong order scheme (\texttt{SOSRI}) in rare stiff regimes. 
Jump components are handled via \texttt{JumpProcesses.jl}, using either constant or variable intensity Poisson processes, depending on the specification \citep{rackauckas2021sde}. 
Jump intensities are state-dependent and evaluated endogenously at each step, while jump impacts are implemented as multiplicative shocks to asset prices and additive shocks to auxiliary state variables. 
Numerical stability is enforced through domain checks (positivity, finiteness)\footnote{Code available on Github:
\url{https://github.com/AdrienNguyenHuu/research-projects/tree/main/SFC-stochastic-bubble}.}.

\subsection{Representative trajectories}
\label{sec:equilibrium}

\subsubsection{Decoupled cases}

We start with the fully decoupled case when $\bar\rho_1 = \bar\lambda^\pm = 0$. In this case, the real economy system \eqref{keen_ponzi} exhibits behaviour similar to that in \cite{GrasselliNguyenHuu2015}, with convergence to an interior equilibrium for the baseline parameters and divergence to an explosive regime when the cost of borrowing, here represented by the difference $(\bar r_L-\bar r_M)$, increases, as shown in Figure \ref{fig:decoupled_a_b}. As expected, in both cases the stock price behaves as a geometric Brownian motion, regardless of the state of the economy, with the trend indicator mean-reverting to the long-term value $\bar\mu_0$ defined in \eqref{mu0}

\begin{figure}[ht!]
\includegraphics[width=.49\linewidth]{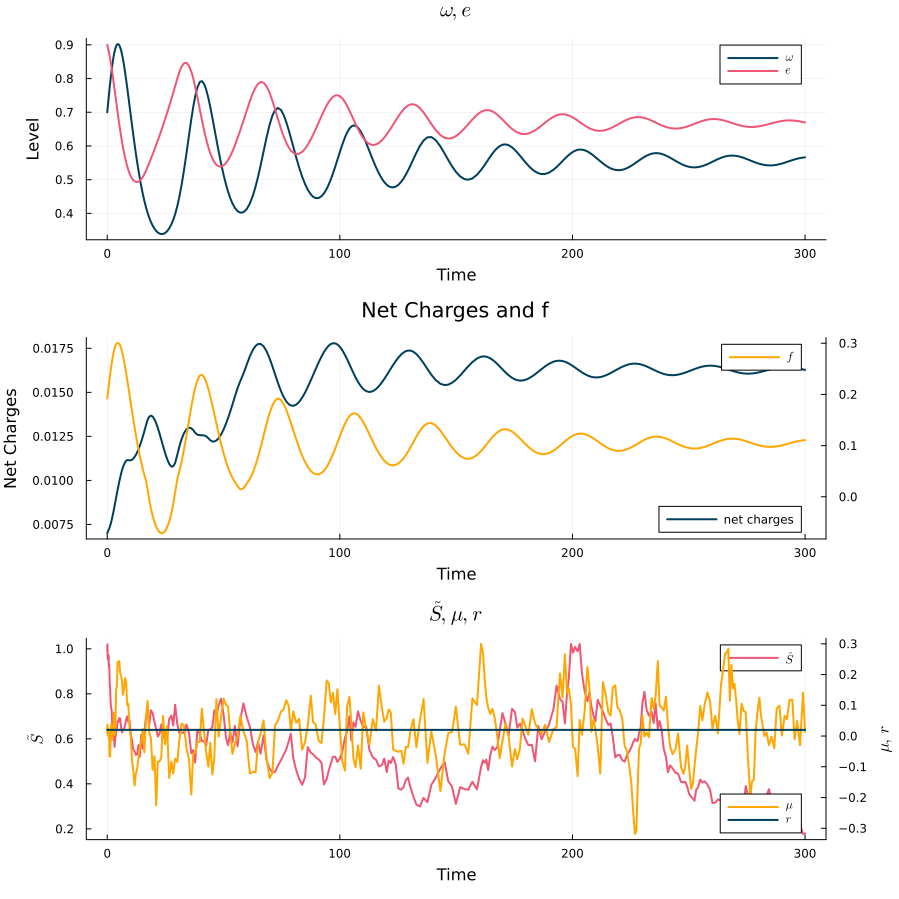}\hfill
\includegraphics[width=.49\linewidth]{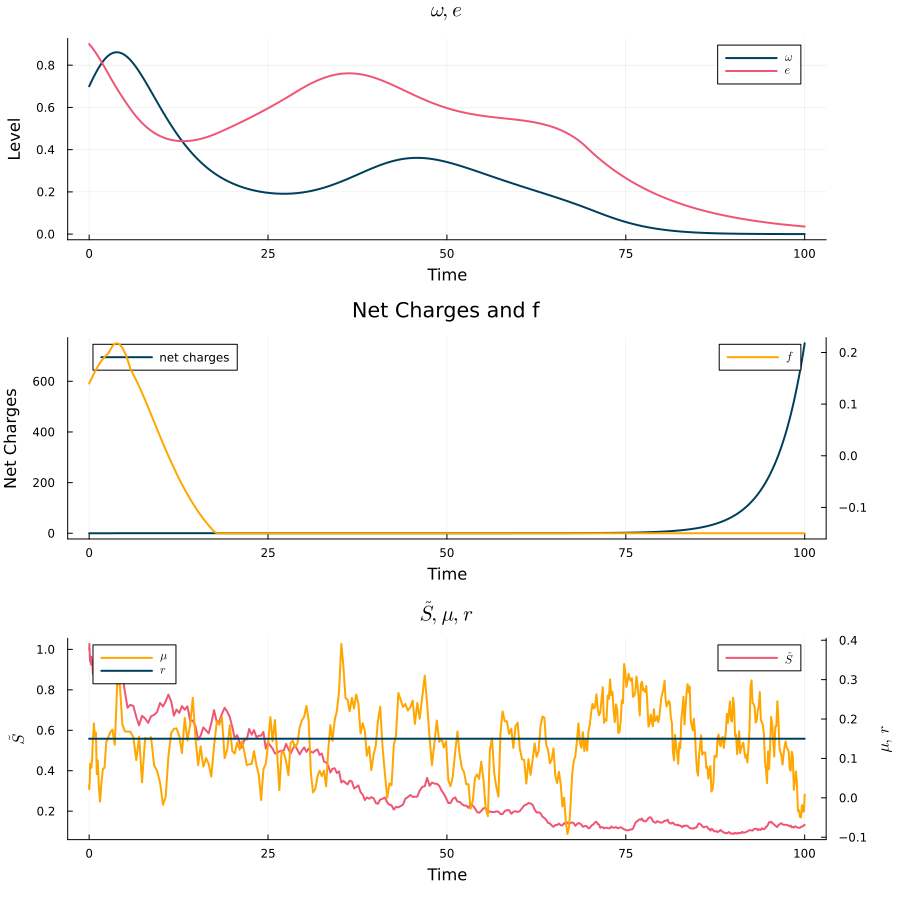}
\caption{\small Wage share $\omega$ and employment rate $e$ (top row), net financial charges $\bar r_L \ell -\bar r_M m$ and speculative flow $f$ (middle row), and discounted stock price $\tilde S$, trend indicator $\mu$ and effective rate $r$ (bottom row) in the deterministic case with $\bar\rho_1 = \bar\lambda^{\pm}=0$. A low interest rate $\bar r_L=0.02$ (left) leads to the interior equilibrium for economic variables and $\bar\mu_0 =  0.025$ for the long-term mean of the trend indicator. A higher interest rate $\bar r_L=0.15$ (right) leads to a collapse of the real economy and a higher value $\bar\mu_0 = 0.145$ for the long-term mean of the trend indicator. In both cases the stock price follows a geometric Browning motion.}\label{fig:decoupled_a_b}
\end{figure}
 
Next we consider the same two regimes, namely low and high cost of borrowing, but with nonzero jump intensities $\bar\lambda^+$ and $\bar\lambda^-$, while keeping $\rho_1 = 0$, so that the real economy affects the stock market but not the other way around. As expected, the real economy behaves in exactly the same way as in the previous case, but now has an effect on the stock price, as shown in Figure \ref{fig:decoupled_c_d}. In the convergent case, we see that the speculative flow is asymptotically constant at a positive value 
$f^+ = \bar f \approx 0.0566$, so that we are in the situation described in Section \ref{f_positive}, with downward jumps in the stock price and a positive drift, leading to the long-term average $\bar\mu^+<\bar\mu_0$ defined in \eqref{mu_plus} for the trend indicator. Conversely, in the divergent case we see that the speculative flow is asymptotically constant at a negative value with $f^- = \bar f \approx 0.15$, so that we are in the situation described in Section \ref{f_negative}, with upwardd jumps in the stock price and a negative drift, leading to the long-term average 
$\bar\mu^- <\bar\mu_0$ for the trend indicator. 

\begin{figure}[ht!]
\includegraphics[width=.49\linewidth]{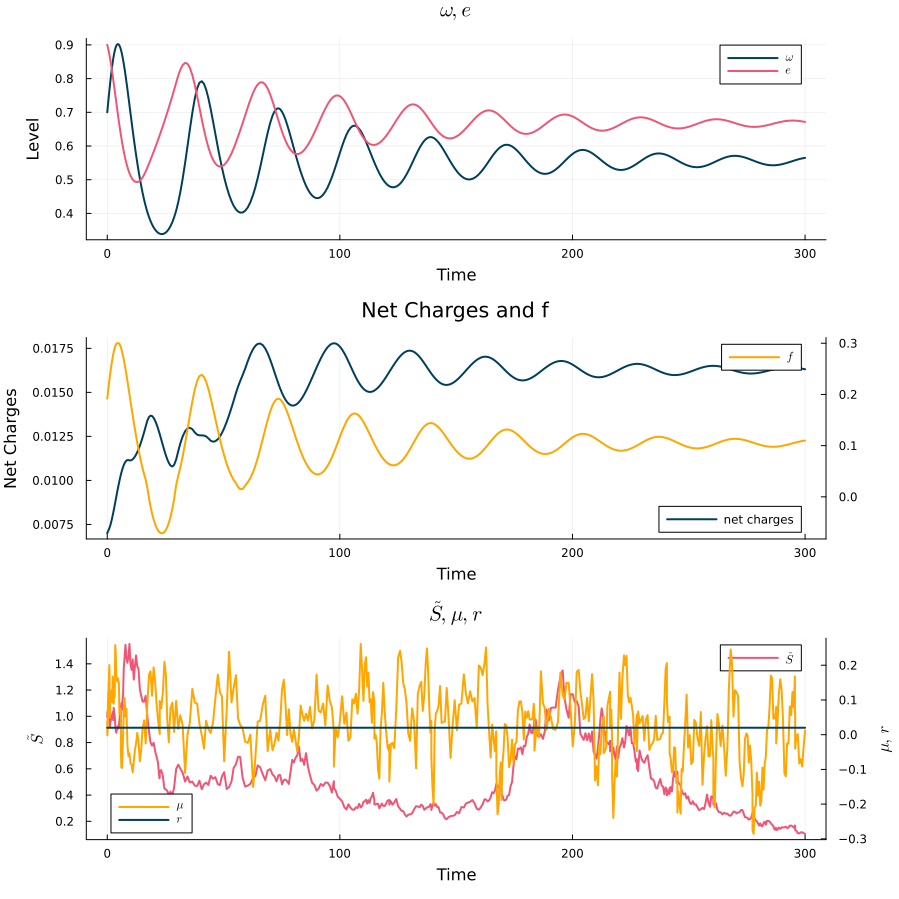}\hfill
\includegraphics[width=.49\linewidth]{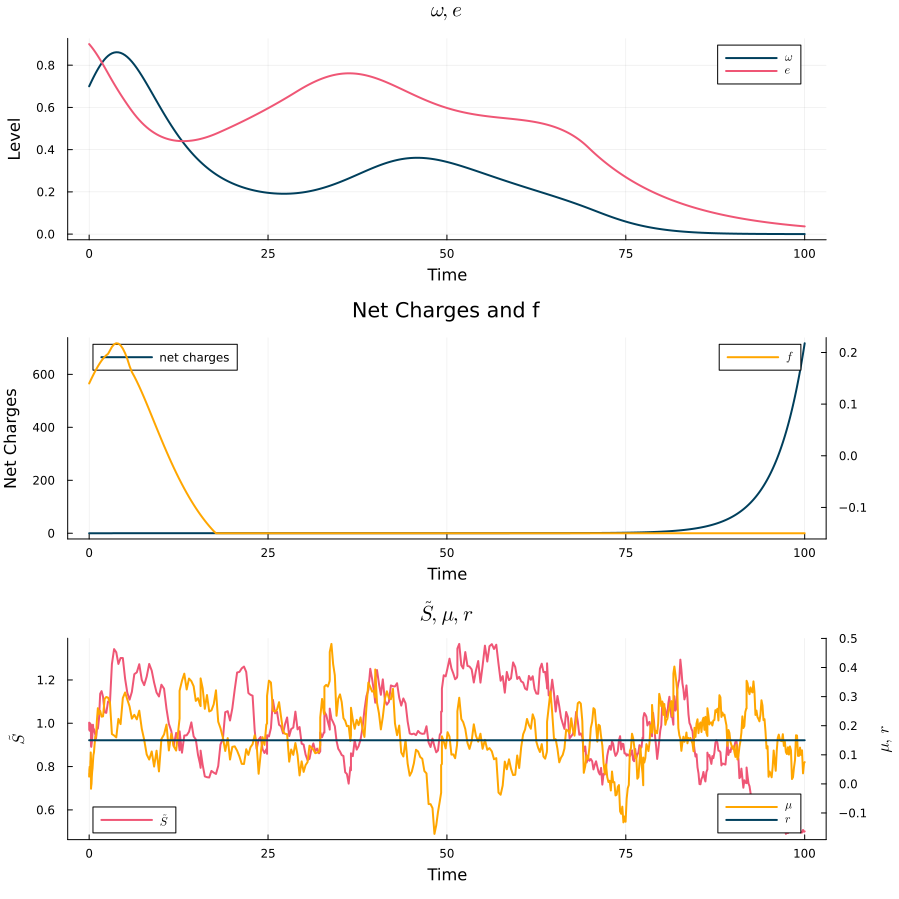}
\caption{\small Wage share $\omega$ and employment rate $e$ (top row), net financial charges $\bar r_L \ell -\bar r_M m$ and speculative flow $f$ (middle row), and discounted stock price $\tilde S$, trend indicator $\mu$ and effective rate $r$ (bottom row) in the case with nonzero jump intensities $\bar\lambda^\pm = 1$ but $\rho_1 = 0$. A low interest rate $r_L=0.02$ (left) still leads to the interior equilibrium for economics variables, but stock price now has downward jumps of relative size $\bar J^+ = 0.1$ occurring on average once every $(\bar\lambda^+\bar f)^{-1}\approx 17.67$ years, and $\bar\mu^+ \approx 0.02469 < \bar\mu_0$. A high interest rate $r_L=0.15$ still leads to a collapse of the economy, with the stock price exhibiting upward jumps of relative size $\bar J^- = 0.1$ occurring on average  once every $(\bar\lambda^-\bar f)^{-1}\approx 6.67$ years, and $\bar\mu^- \approx  0.1435 < \bar\mu_0 $. (right)}\label{fig:decoupled_c_d}
\end{figure}

As a final experiment in this section we revert to the case of jump intensities $\bar\lambda^\pm = 0$ but allow the stock market to influence the economy through the interest rate by setting $\bar\rho_1 = 0.01$ in \eqref{r_stochastic}. In this case, even in the absence of jumps, we see that the stock market influence through the effective interest rate $r_t$ can have the effect of causing the real economy to collapse even in the case of a low baseline interest rate $\bar r_L = 0.02$, essentially by raising the effective lending rate during periods of market downturn.

\begin{figure}[ht!]
\includegraphics[width=.49\linewidth]{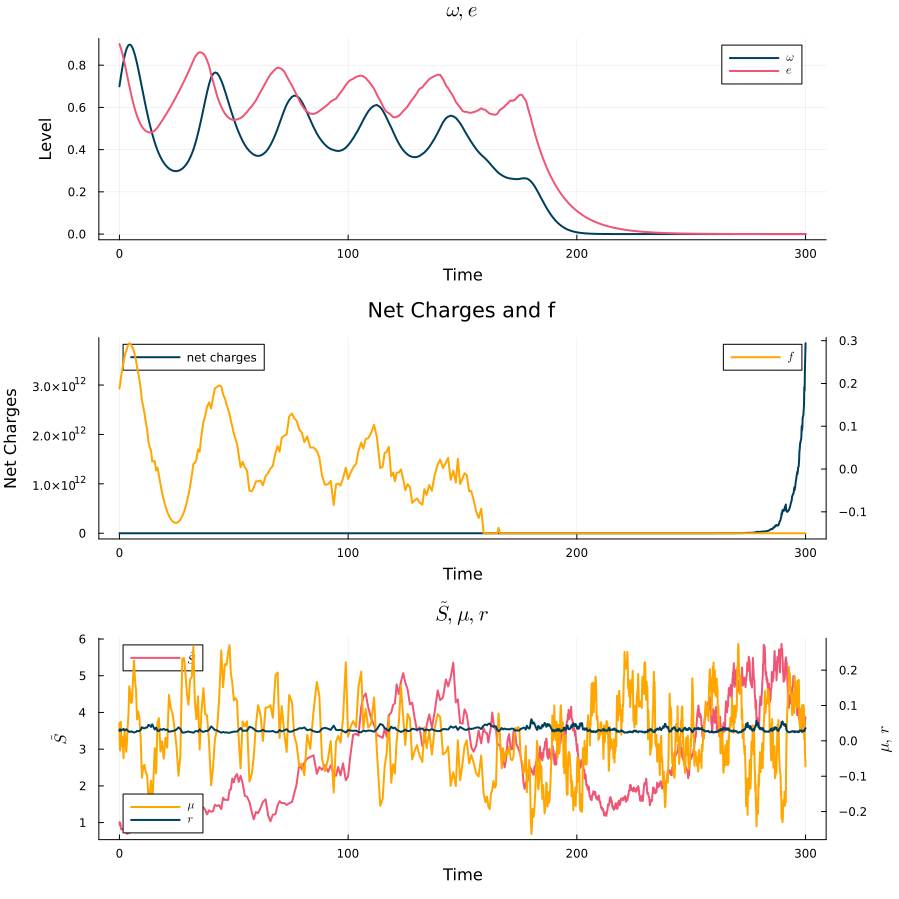}\hfill
\includegraphics[width=.49\linewidth]{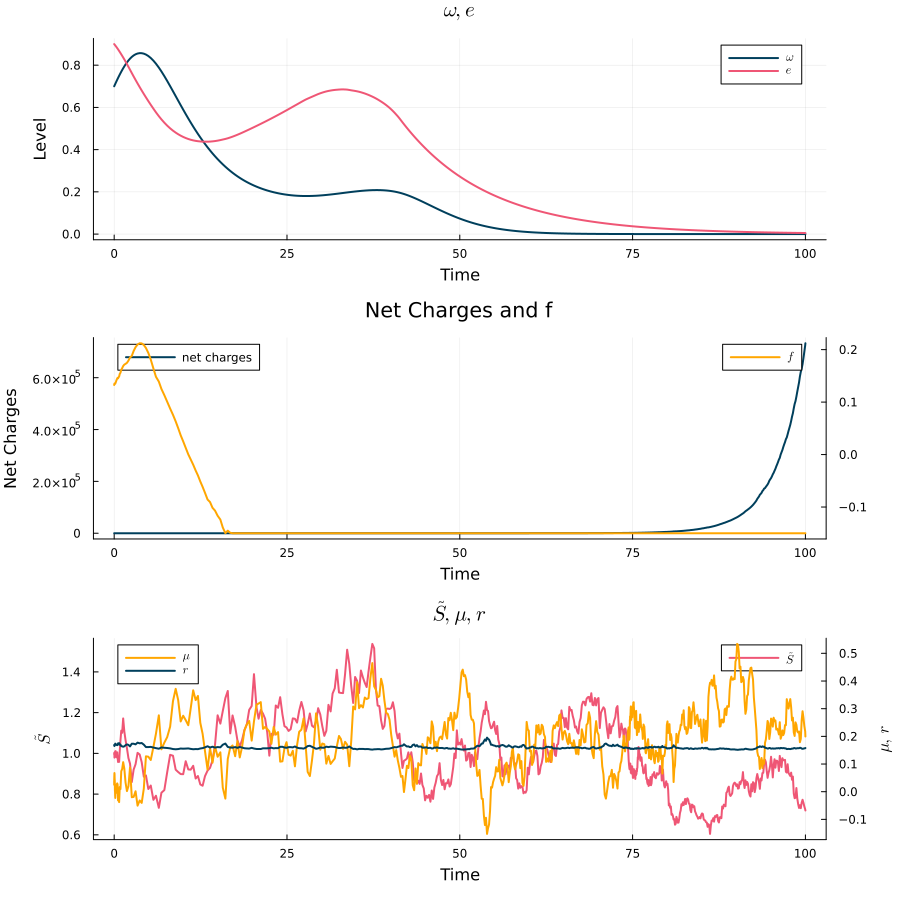}
\caption{\small Wage share $\omega$ and employment rate $e$ (top row), net financial charges $\bar r_L \ell -\bar r_M m$ and speculative flow $f$ (middle row), and discounted stock price $\tilde S$, trend indicator $\mu$ and effective rate $r$ (bottom row) in the case with $\bar\lambda^\pm = 0$ and $\rho_1 = 0.01$. A low baseline interest rate $r_L=0.02$ (left) can now be associate with a collapse in the economy because of the influence of the stock price on the effective rate $r$, even in the absence of jumps. The higher baseline interest rate $r_L=0.15$ (right) simply makes this collapse occur sooner.}\label{fig:decoupled_e_f}
\end{figure}

\subsubsection{Fully coupled cases}

We now consider the full model, where the economy affects the stock market through nonzero jump intensities that depend on the speculative flow $f$ and the stock market affect the economy through the stochastic interest $r$. We consider first the baseline parameters in Table 1, which corresponds to adding a stochastic interest rate to the case previously considered in the left panel of Figure \ref{fig:decoupled_c_d} or, alternatively, adding stock price jumps to the case previously considered in the left panel of Figure \ref{fig:decoupled_e_f}. What we observe is that the economy fluctuates between periods of high and low speculative flows and eventually collapses when interest rates remain high for extended periods, as shown in the left panel of Figure \ref{fig:coupled_a_b}. The right panel shows similar behaviour, only exacerbated by a higher stock price volatility $\bar\sigma = 0.25$, instead of the base value $\bar\sigma = 0.1$ from Table \ref{parameters}. 

\begin{figure}[ht!]
\includegraphics[width=.49\linewidth]{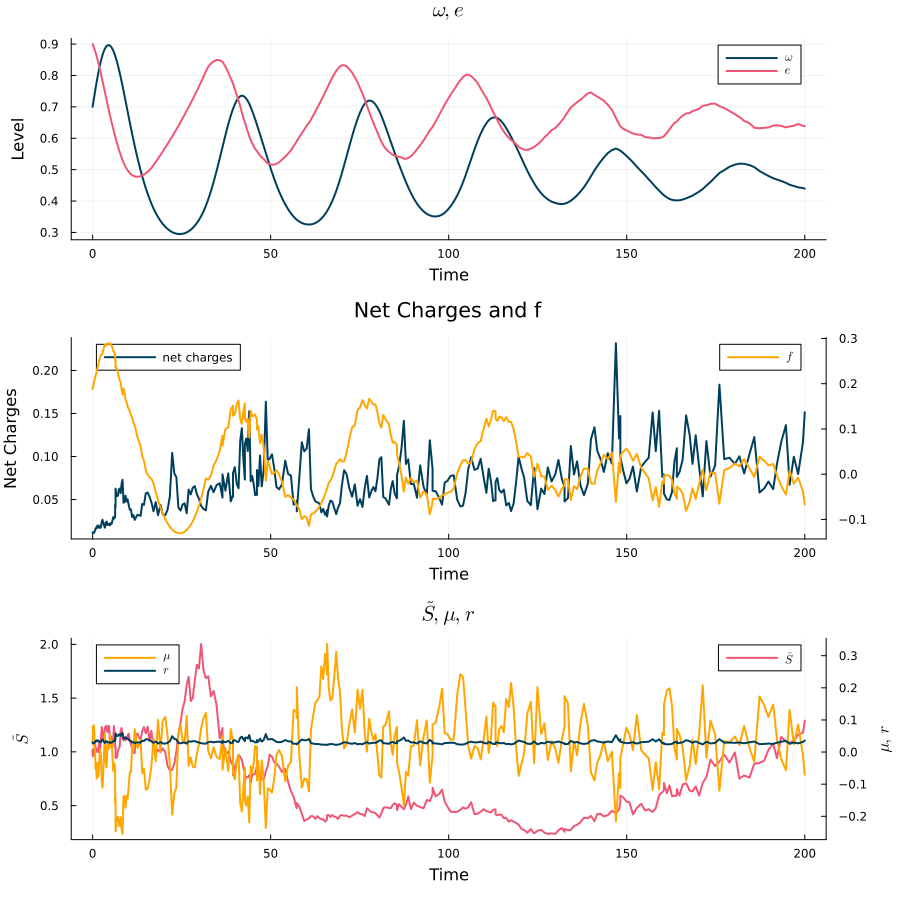}\hfill
\includegraphics[width=.49\linewidth]{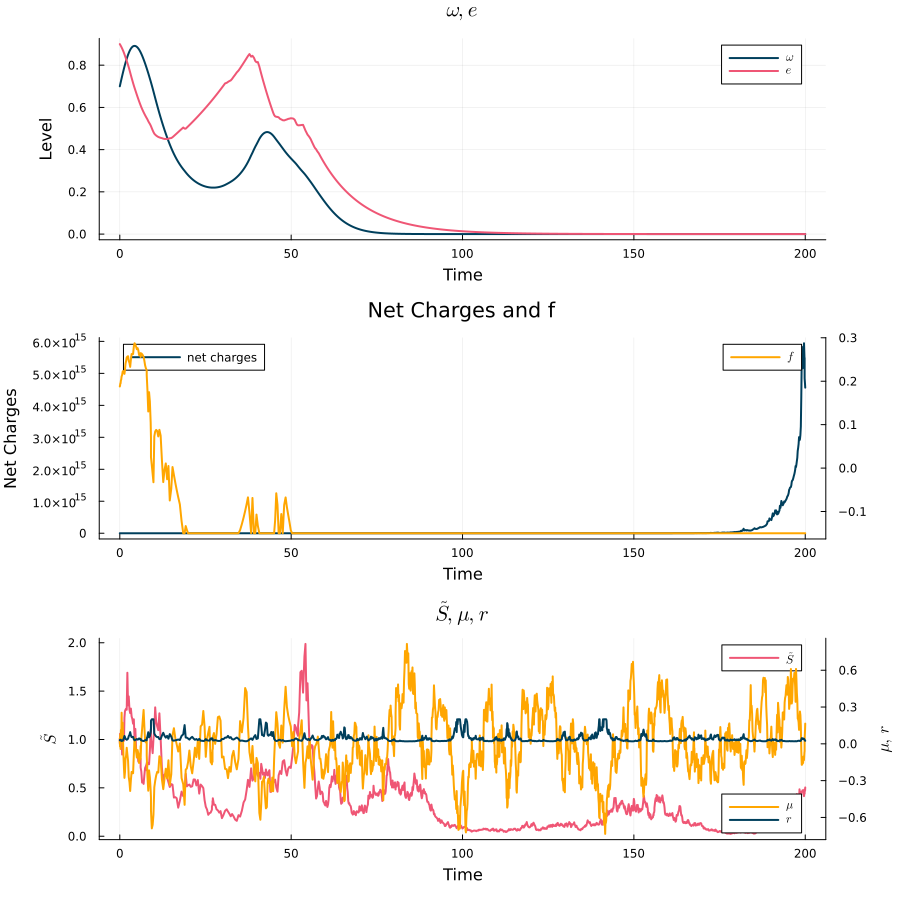}
\caption{\small Wage share $\omega$ and employment rate $e$ (top row), net financial charges $\bar r_L \ell -\bar r_M m$ and speculative flow $f$ (middle row), and discounted stock price $\tilde S$, trend indicator $\mu$ and effective rate $r$ (bottom row) for baseline case with all parameters as in Table \ref{table} (left) and case with higher stock price volatility $\bar\sigma = 0.25$ (right).} \label{fig:coupled_a_b}
\end{figure}

Next in Figure \ref{fig:coupled_c_d}, we explore the effect of the speed of mean reversion of the trend indicator $\mu$. Instead of the base value $\bar\eta_\mu = 0.5$ from Table \ref{parameters}, the left panel shows the results for a faster mean reversion $\bar\eta_\mu = 5$, where we see that the economy experiences prolonged periods without a major crash, whereas the right panel shows that a slower mean reversion $\bar\eta_\mu = 0.2$ leads to a crash sooner.    

\begin{figure}[ht!]
\includegraphics[width=.49\linewidth]{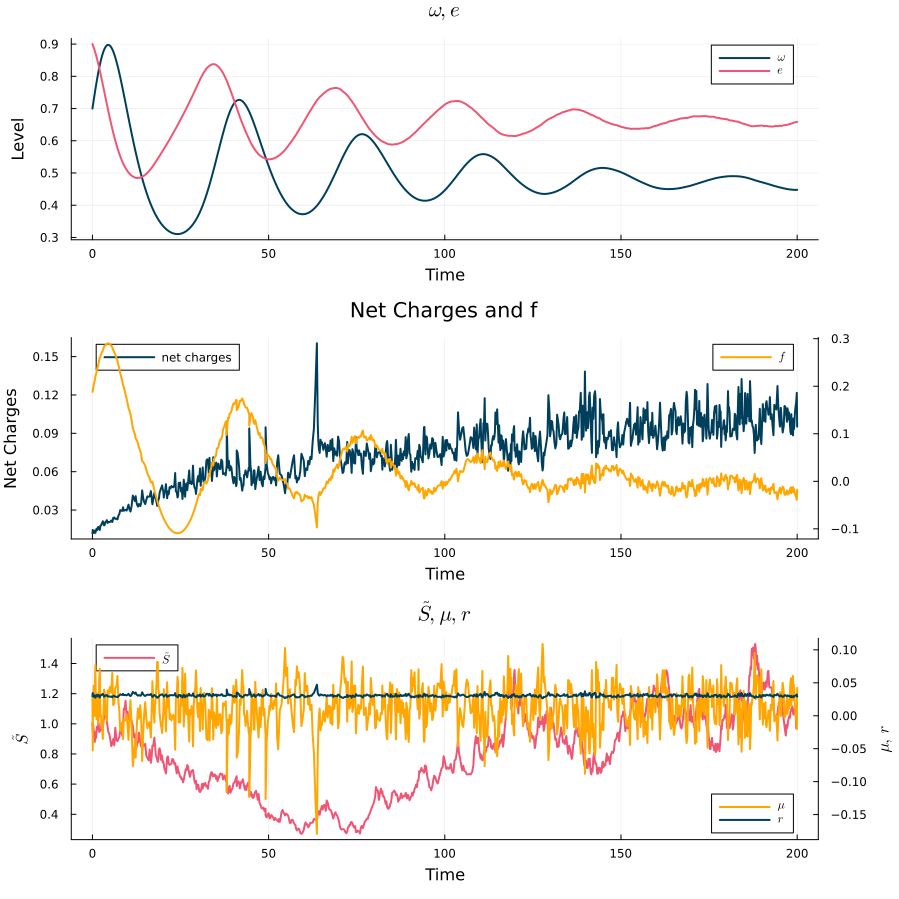}\hfill
\includegraphics[width=.49\linewidth]{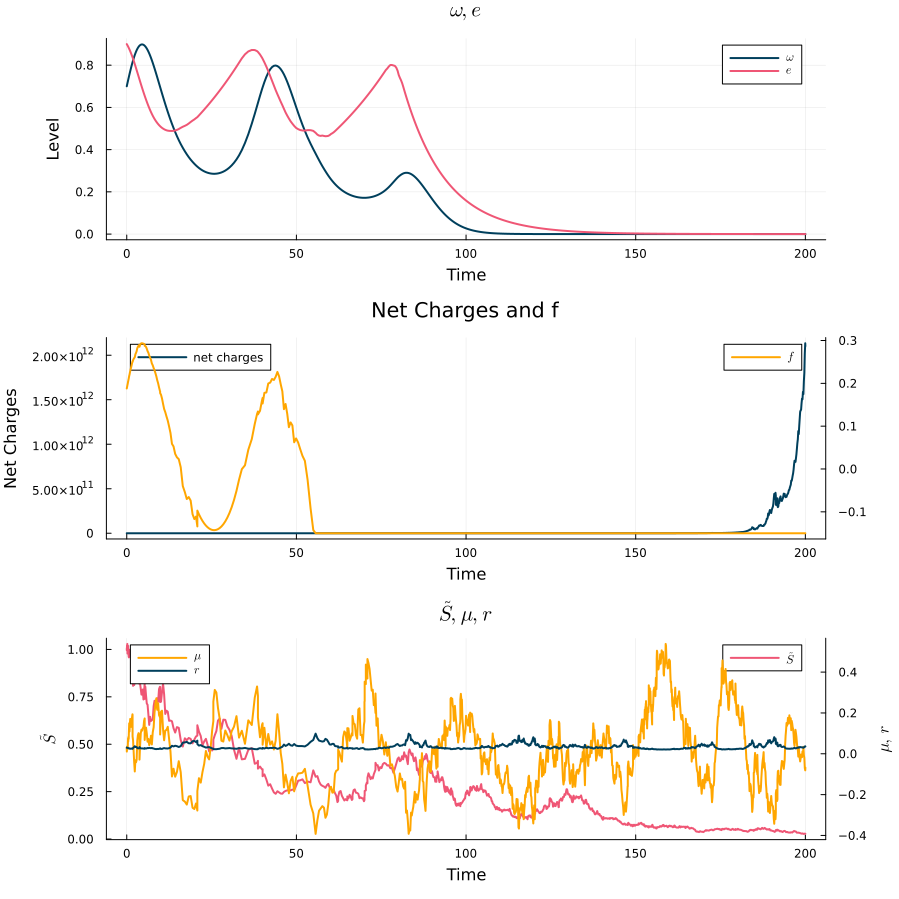}
\caption{\small Wage share $\omega$ and employment rate $e$ (top row), net financial charges $\bar r_L \ell -\bar r_M m$ and speculative flow $f$ (middle row), and discounted stock price $\tilde S$, trend indicator $\mu$ and effective rate $r$ (bottom row) for the cases with $\bar\eta_\mu = 5$ (left) and $\bar\eta_\mu = 0.2$.} \label{fig:coupled_c_d}
\end{figure}

As a final experiment in this section, in Figure \ref{fig:coupled_e_f} we repeat the conditions of faster and slower mean reversion of the trend indicator, but with $\bar\rho_2 = 3$ instead of the base value $\bar\rho_2 = 5$ from Table \ref{parameters}, meaning that the response of the interest rate to a drop in the trend indicator is less pronounced, with corresponding improvements in the economic outcomes. 

\begin{figure}[ht!]
\includegraphics[width=.49\linewidth]{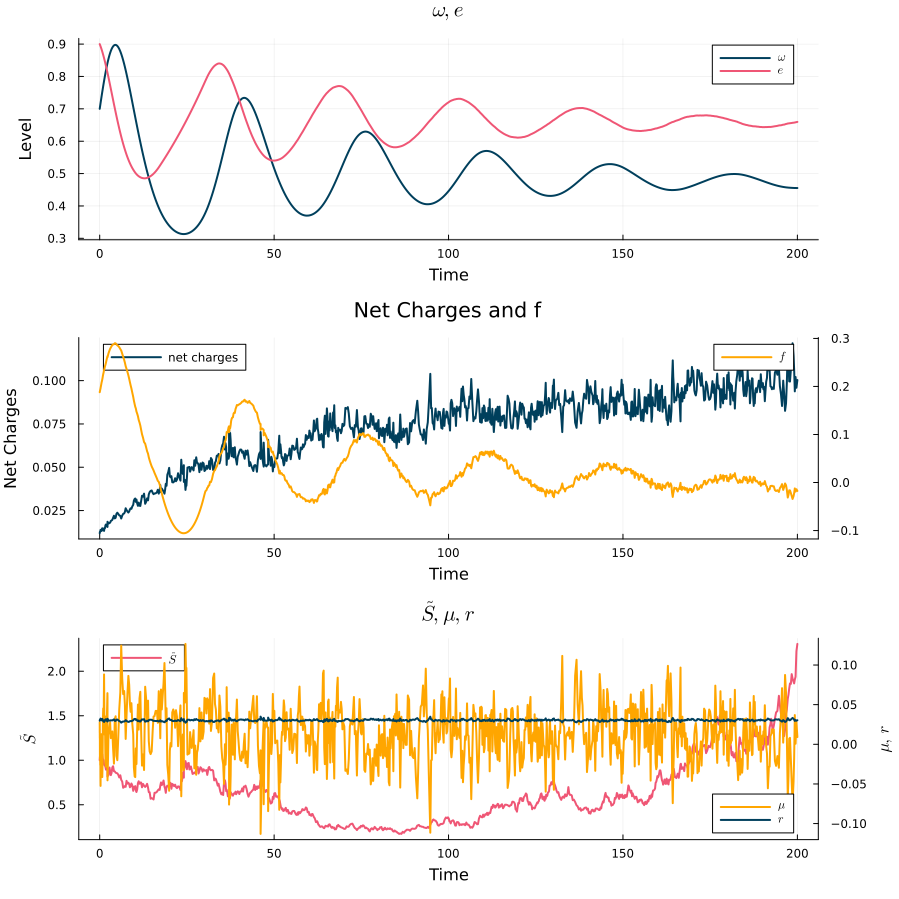}\hfill
\includegraphics[width=.49\linewidth]{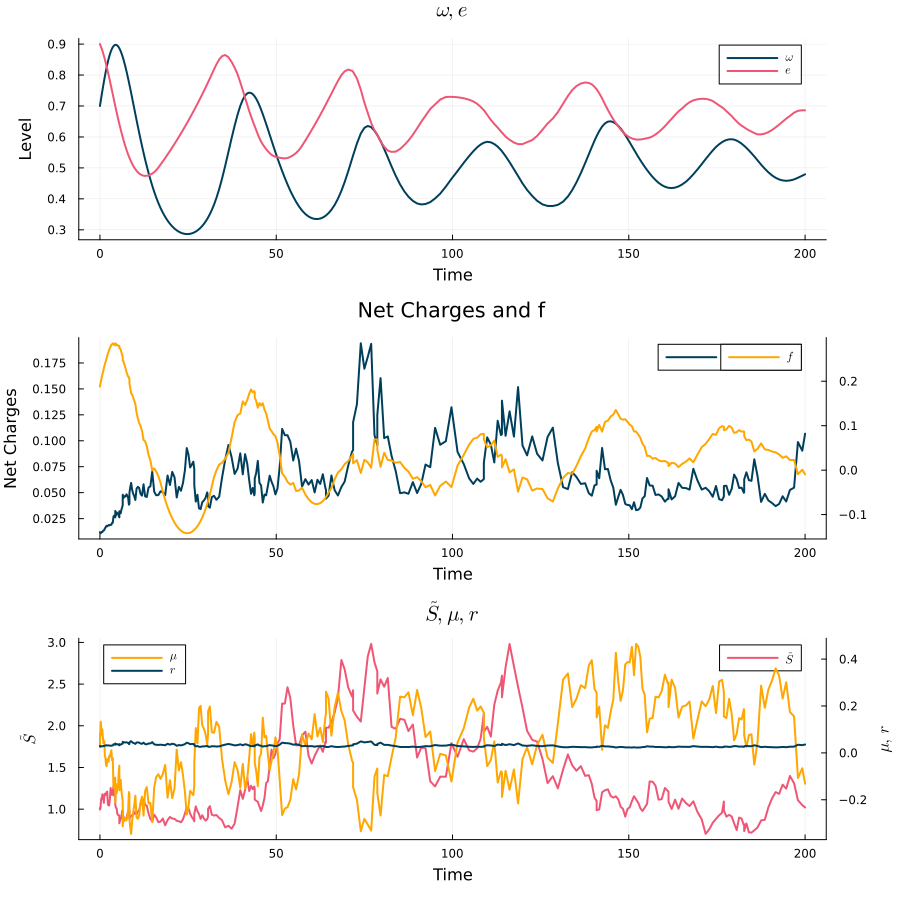}
\caption{\small Wage share $\omega$ and employment rate $e$ (top row), net financial charges $\bar r_L \ell -\bar r_M m$ and speculative flow $f$ (middle row), and stock price $S$, trend indicator $\mu$ and effective rate $r$ (bottom row) for the cases with $\bar\eta_\mu = 5, \bar\rho_2 = 3$ (left) and $\bar\eta_\mu = 0.2, \bar\rho_2 = 3$.} \label{fig:coupled_e_f}
\end{figure}

\subsection{Parameter sensitivity}

In the previous section we explore the trajectories of key variables in the model under different combinations of parameters. Broadly speaking, higher interest rates for longer periods lead to unsustainable accumulation of debt and an eventual collapse of the economy. In this section, we explore the sensitivity of the model with respect to some of the newly introduced parameters in a more systematic way. 

Specifically, because the model is stochastic, we simulate it multiple times for each parameter combination and compute the proportion of the runs that exhibit an economic collapse within 150 years, defined as a state in which one of the following two conditions is reached:
\begin{equation}
    \label{eq:crisis definition}
    e\le 0.05 \; \quad \text{or} \quad \ell - m \ge 10\;.
\end{equation}
Typically, in the model of \cite{Keen1995} and its extensions, low employment rate and high private debt tend to occur together in the same explosive equilibrium, as shown in \cite{GrasselliCostaLima2012} and \cite{GrasselliNguyenHuu2015}. In our simulations, we verify them as separate conditions as they represent, in principle, different catastrophic outcomes for the economy. We also classify as a collapse any run in which the numerical simulation of the model fails to produce a result.    

The top-left panel of Figure \ref{fig:sensitivity} confirms that a higher lending rate $\bar r_L$ leads to more frequent crises, as previously observed in the right panels of Figures \ref{fig:decoupled_a_b} to \ref{fig:decoupled_e_f}. In the top-right panel of Figure \ref{fig:sensitivity}, we see that a higher stock price volatility $\bar\sigma$ also leads to more frequent crises, as previously observed in the right panel of Figure \ref{fig:coupled_a_b}. The bottom panels of Figure \ref{fig:sensitivity} confirm the behaviour of the model previously observed in Figures \ref{fig:coupled_c_d} and \ref{fig:coupled_e_f} with respect to the speed of mean reversion $\bar\eta_\mu$ of the trend indicator and the reaction of the interest rate to drops in the trend indicator, as measured by $\bar\rho_2$. Essentially, faster mean reversion leads to fewer crises, as the interest rate spikes are not sustained for long periods, whereas larger spikes in interest rates when the trend indicator drops lead to more frequent crises. 

Figure \ref{fig:heatmaps} reinforces these conclusions by varying two parameters at a time, with the left panel showing that combinations of low values of $\bar\eta_\mu$ and high values of $\bar\rho_2$ lead to more frequent crises, and the right panel showing that high values of $\bar r_L$ and $\bar\sigma$ lead to the same behaviour. 

\begin{figure}[ht!]
\includegraphics[width=.49\linewidth]{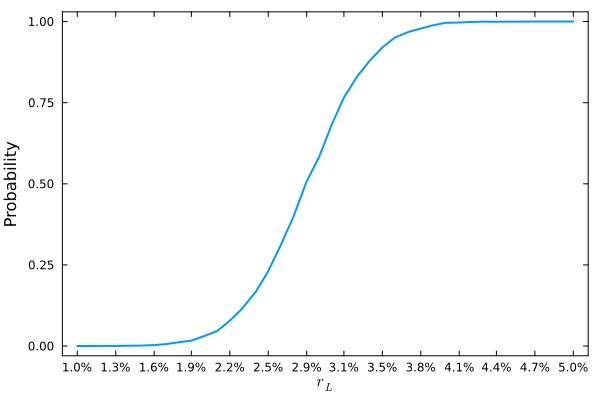}\hfill
\includegraphics[width=.49\linewidth]{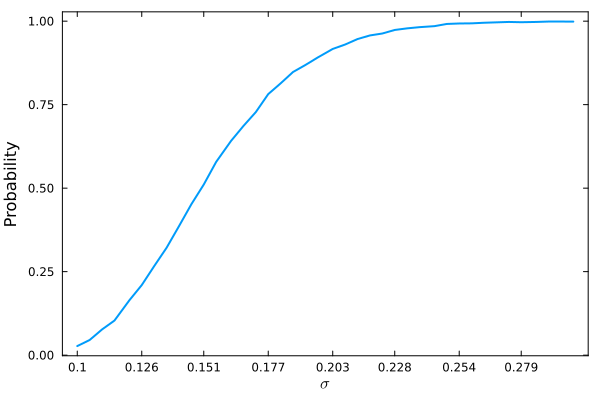}
\includegraphics[width=.49\linewidth]{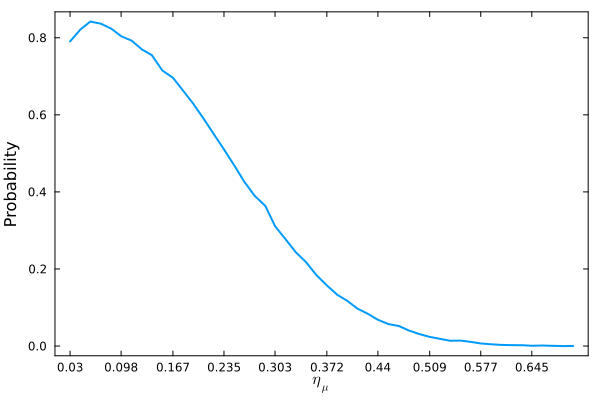}
\includegraphics[width=.49\linewidth]{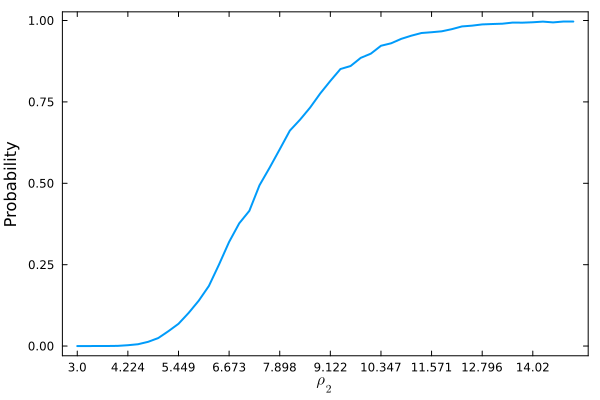}\hfill
\caption{\small Crisis probability with respect to single parameter variations. 
A crisis event is defined as either $e_t\le 0.05$ or $\ell_t - m_t\ge 10$ for some $t\in [0,150]$, or a blow-up in the numerical simulation scheme before $T=150$.
For each parameter value in the horizontal axes above, the probability is estimated via 5000 Monte-Carlo system simulations with all other parameters fixed as in Table \ref{parameters}.} 
\label{fig:sensitivity}
\end{figure}

\begin{figure}[ht!]
\includegraphics[width=.49\linewidth]{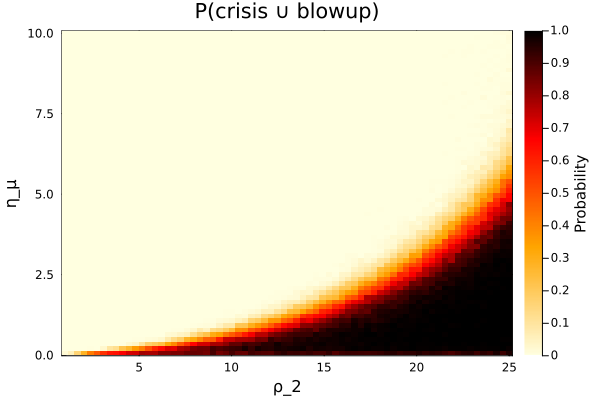}\hfill
\includegraphics[width=.49\linewidth]{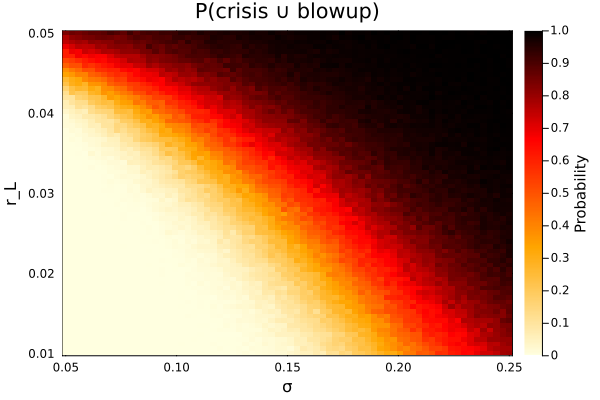}
\caption{\small Heatmaps of crisis probability with respect to two-parameter variations. Crisis event defined as in Figure \ref{fig:sensitivity}. The crisis probability is estimated at each point via 300 Monte-Carlo system simulations, with all other parameters fixed as in Table \ref{parameters}.} 
\label{fig:heatmaps}
\end{figure}

\section{Conclusion}

We have developed a stochastic macro--financial model that unifies debt-driven real dynamics with endogenous financial instability within a stock--flow consistent framework. 
By embedding a compensated jump--diffusion process with state-dependent intensities into a Keen-type macroeconomic core, the model formalizes how speculative credit expansion simultaneously fuels asset price growth and increases crash risk.

The feedback loop between market dynamics and lending spreads generates endogenous boom--bust cycles without relying on exogenous shocks. 
Financial crises emerge as structural features of the system rather than as externally imposed disturbances. 
The mathematical construction ensures global well-posedness, strict positivity of asset prices, and absence of finite-time explosion, providing a rigorous foundation for numerical exploration.

Beyond its immediate analytical results, the framework offers a tractable platform for studying macroprudential regulation, credit policy interventions, and the systemic implications of leverage-dependent crash risk. 
Future work may extend the model toward heterogeneous agents, multi-asset markets, or stochastic monetary policy rules, further bridging dynamical systems theory and macro-financial analysis.

More broadly, the model illustrates how mathematical consistency and economic realism can be combined to study financial instability as an endogenous dynamical phenomenon.
\begin{appendix}

\section{Existence and construction}
\label{sec:existence}

We prove that the real-economy block \eqref{keen_ponzi} does not explode in finite time
when the interest rate $r=r_t$ is time-varying but bounded.
We then construct the asset-price process with state-dependent jump intensities and show strict positivity and non-explosion.
Finally, we obtain existence for the coupled macro-financial system by a pathwise argument.
To ease the notations below, we replace $m_f$ by $m$.

\begin{Proposition}
\label{prop:global_existence}
Let $r=(r_t)_{t\ge 0}$ be a bounded function of time such that
\begin{equation}
\label{eq:r_bounded_appendix}
0 \le r_t \le \bar r_{\max},
\qquad
\forall t\ge 0.
\end{equation}
Let $(\omega_0,e_0,m_{f,0},\ell_0)\in (0,+\infty)\times(0,+\infty)\times\R^2$.
Let $\pi$ be given by \eqref{eq:profit share}, namely
\begin{equation}
\label{eq:pi_appendix}
\pi_t
=
1-\omega_t-\bar\delta\bar\nu+\bar r_M m_t-r_t\ell_t.
\end{equation}
Let $g(\pi)$ be defined by \eqref{eq:growth}, and let
$f_t=\Psi(g(\pi_t)+i(\omega_t))$ with $\Psi$ defined in \eqref{eq:Psi}.
Assume that $\Phi$ is affine as in \eqref{eq:philips},
and that $i(\omega)$ is given by \eqref{inflation}.
Then \eqref{keen_ponzi} admits a unique global solution
\[
(\omega_t,e_t,m_t,\ell_t)_{t\ge 0}
\in
(0,+\infty)\times(0,+\infty)\times\R^2,
\]
and no component can blow up in finite time.
\end{Proposition}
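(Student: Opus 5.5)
Local existence and uniqueness come first, then a priori bounds on a maximal interval $[0,T^*)$, and finally a contradiction with $T^*<\infty$. On the state space the right-hand side of \eqref{keen_ponzi} is built from $\pi$ in \eqref{eq:pi_appendix}, which is affine in $(\omega,m,\ell)$ with a time-dependent but bounded coefficient $r_t$, composed with the clipped-affine maps $\kappa,\Delta,\Psi$ of \eqref{eqn:kappa}, \eqref{eqn:div_function}, \eqref{eq:Psi}. These maps are globally Lipschitz and bounded. The remaining ingredients, $\Phi$ and $i$, are affine. So the vector field is locally Lipschitz in the state, uniformly in $t$, and measurable and locally bounded in $t$. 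Carathéodory theory then gives a unique maximal solution on $[0,T^*)$, with blow-up of $|(\omega,e,m,\ell)|$, or exit from the open state space, as $t\uparrow T^*$ whenever $T^*<\infty$. It therefore suffices to bound every component on any finite interval $[0,T]$ with $T<T^*$, and to keep $\omega$ and $e$ away from $0$.

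The bounds follow a triangular structure. Since $\kappa\in[\bar\kappa_{\min},\bar\kappa_{\max}]$, $g(\pi)$ is bounded by a constant $G$ regardless of $\pi$. Hence
\[
e_t=e_0\exp\Bigl(\int_0^t (g(\pi_s)-\bar\alpha-\bar\beta)\,ds\Bigr),
\]
so $e_t$ is positive and bounded above and below on $[0,T]$ by $e_0e^{\pm Ct}$. Then $\Phi(e_t)$ is bounded on $[0,T]$. The $\omega$ equation reads
\[
\dot\omega=\omega\bigl[\Phi(e)-\bar\alpha-(1-\bar\gamma)\bar\eta_p(\bar\xi\omega-1)\bigr].
\]
This is logistic-type when $\bar\gamma\le 1$: the term $-(1-\bar\gamma)\bar\eta_p\bar\xi\,\omega^2$ is nonpositive. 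Since the equation is of the form $\dot\omega=\omega\,h$, positivity is preserved, with $\omega_t=\omega_0\exp\int_0^t h_s\,ds$. Upper bounds follow by comparison with $\dot\omega\le \omega(c_1+c_2)$, which gives exponential growth at most. A lower bound comes from $h\ge -c-(1-\bar\gamma)\bar\eta_p\bar\xi\,\omega$ together with the upper bound on $\omega$. Thus $\omega$ stays in a compact subset of $(0,\infty)$ on $[0,T]$, and $i(\omega_t)$ is bounded.

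The main obstacle is the pair $(m,\ell)$, because $\pi$ depends on them unboundedly through $\bar r_M m - r_t\ell$. I would exploit that every place $\pi$ enters is saturated. The terms $g(\pi)$, $\kappa(\pi)$, $\Delta(\pi)$ and $f=\Psi(\cdot)$ are all bounded by constants, and $i(\omega)$ is now bounded. Only the explicit $\pi$ term in $\dot m$ is unbounded, and it is affine in $(m,\ell)$ with bounded coefficients. So on $[0,T]$,
\[
\dot\ell = a_t + (r_t-\bar\kappa_L-g-i)\,\ell,\qquad
\dot m = b_t + \bar r_M m - r_t\ell + (r_t-\bar\kappa_L)\ell - (g+i)m ,
\]
with $a_t,b_t$ bounded and all coefficients bounded by $0\le r_t\le\bar r_{\max}$. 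Writing $x=(m,\ell)$, we get $|\dot x|\le A+B|x|$. Gronwall then yields $|x_t|\le (|x_0|+AT)e^{BT}$. Note also that the $-r_t\ell$ and $+r_t\ell$ terms cancel in $\dot m$.

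Since all components are bounded, and $\omega,e$ are bounded away from zero, on every $[0,T]$ with $T<T^*$, the blow-up alternative forces $T^*=\infty$. Uniqueness follows from the local Lipschitz property. The only delicate points are checking that $\bar\gamma\le1$ is used for the $\omega$ upper bound and that $r$ need only be measurable.
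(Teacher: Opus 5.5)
Your proposal is correct and follows the paper's proof essentially step for step. Both arguments run local existence from the Lipschitz structure, then the triangular a priori bounds ($e$ via the boundedness of $\kappa$, then $\omega$ via Gronwall using $\bar\gamma\le 1$ and $\omega>0$, then $(m,\ell)$ via a linear-growth Gronwall estimate that exploits the saturation of $\kappa,\Delta,\Psi$), and close with the blow-up alternative. Your extra care tightens points the paper's proof leaves implicit: you keep $\omega,e$ bounded away from zero, you make the use of $\bar\gamma\le 1$ explicit, and you invoke Carath\'eodory theory so that a merely measurable (e.g.\ c\`adl\`ag) rate $r_t$ is covered.
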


\begin{proof}
We view \eqref{keen_ponzi} as a non-autonomous ODE
\[
\dot x_t = F(t,x_t),
\qquad
x_t=(\omega_t,e_t,m_t,\ell_t),
\]
where the time-dependence only comes from $r_t$.
We recall the standard blow-up alternative for maximal solutions of ODEs:
if $T_{\max}<+\infty$, then
\begin{equation}
\label{eq:blow_up_alternative}
\limsup_{t\uparrow T_{\max}}
\left(
|\omega_t|+|e_t|+|m_t|+|\ell_t|
\right)
=
+\infty.
\end{equation}
We prove that for every $T<+\infty$ the solution is bounded on $[0,T]$.
This implies $T_{\max}=+\infty$ by \eqref{eq:blow_up_alternative}.

\medskip

The functions $\kappa$, $\Delta$ and $\Psi$ are of the ReLU form,
see \eqref{eqn:kappa}, \eqref{eqn:div_function} and \eqref{eq:Psi}.
Hence they are globally Lipschitz and globally bounded.
In particular, define the bounds
\begin{align*}
\label{eq:bounds_kappa_Delta_Psi}
\bar\kappa_{\rm abs}
&:=
\max\big(|\kappa_{\min}|,\bar\kappa_{max}\big),\\
\bar\Delta_{\rm abs}
&:=
\max\big(|\bar\Delta_{min}|,\bar\Delta_{max}\big), \\
\bar\Psi_{\rm abs}
&:=
\max\big(|\bar\Psi_{min}|,\bar\Psi_{max}\big).
\end{align*}
Then, for all $u\in\R$,
\begin{equation}
\label{eq:kappa_Delta_Psi_abs_bounds}
|\kappa(u)|\le \bar\kappa_{\rm abs},
\qquad
|\Delta(u)|\le \bar\Delta_{\rm abs},
\qquad
|\Psi(u)|\le \bar\Psi_{\rm abs}.
\end{equation}
Moreover, $\Phi$ is affine by \eqref{eq:philips}.
The inflation function $i(\omega)$ is affine by \eqref{inflation}.
Since $\pi$ is affine in $(\omega,m,\ell)$ for each fixed $r_t$,
the vector field $x\mapsto F(t,x)$ is locally Lipschitz on
$(0,+\infty)\times(0,+\infty)\times\R^2$,
uniformly in $t$ on bounded intervals, because of \eqref{eq:r_bounded_appendix}.
Therefore, by the Cauchy--Lipschitz theorem, there exists a unique maximal
solution on some interval $[0,T_{\max})$ with $T_{\max}\in(0,+\infty]$.

The first two equations of \eqref{keen_ponzi} can be written as
\begin{equation}
\label{eq:omega_mult}
\dot\omega_t
=
\omega_t
\left[
\Phi(e_t)-\bar\alpha-(1-\bar\gamma)i(\omega_t)
\right],
\end{equation}
\begin{equation}
\label{eq:e_mult}
\dot e_t
=
e_t
\left[
g(\pi_t)-\bar\alpha-\bar\beta
\right].
\end{equation}
Hence, for all $t<T_{\max}$,
\begin{equation}
\label{eq:omega_exp_form}
\omega_t
=
\omega_0
\exp\left(
\int_0^t
\left[
\Phi(e_s)-\bar\alpha-(1-\bar\gamma)i(\omega_s)
\right]ds
\right),
\end{equation}
\begin{equation}
\label{eq:e_exp_form}
e_t
=
e_0
\exp\left(
\int_0^t
\left[
g(\pi_s)-\bar\alpha-\bar\beta
\right]ds
\right).
\end{equation}
Therefore $\omega_t>0$ and $e_t>0$ for all $t<T_{\max}$.
From \eqref{eq:growth} and the bound $\kappa(\pi)\le \bar\kappa_{\rm abs}$, 
we obtain for all $t<T_{\max}$,
\begin{equation}
\label{eq:g_upper}
g(\pi_t)
=
\frac{\kappa(\pi_t)}{\bar\nu}-\bar\delta
\le
\frac{\bar\kappa_{\rm abs}}{\bar\nu}-\bar\delta
=:
g_{\max}.
\end{equation}
Plugging \eqref{eq:g_upper} into \eqref{eq:e_mult} yields
\begin{equation}
\label{eq:e_gronwall}
\dot e_t
\le
\left(g_{\max}-\bar\alpha-\bar\beta\right)e_t.
\end{equation}
By Gronwall's lemma, for every $T<+\infty$ and all
$t\in[0,T]\cap[0,T_{\max})$,
\begin{equation}
\label{eq:e_bound_T}
e_t
\le
e_0\exp\left(\left(g_{\max}-\bar\alpha-\bar\beta\right)t\right)
\le
e_0\exp\left(\left(g_{\max}-\bar\alpha-\bar\beta\right)T\right)
=:E_T.
\end{equation}
From \eqref{inflation} and the positivity of $\omega$, we have
\begin{equation}
\label{eq:i_lower}
i(\omega)=\bar\eta_p(\bar\xi\omega-1)\ge -\bar\eta_p.
\end{equation}
Hence, for all $t<T_{\max}$,
\begin{equation}
\label{eq:omega_ineq_1}
\dot\omega_t
=
\omega_t\left[\Phi(e_t)-\bar\alpha-(1-\bar\gamma)i(\omega_t)\right]
\le
\omega_t\left[\Phi(e_t)-\bar\alpha+(1-\bar\gamma)\bar\eta_p\right].
\end{equation}

Let $T<+\infty$ and use the bound \eqref{eq:e_bound_T}.
Since $\Phi(e)=\bar\Phi_0+\bar\Phi_1 e$ by \eqref{eq:philips},
we obtain on $[0,T]\cap[0,T_{\max})$,
\begin{equation}
\label{eq:Phi_upper_T}
\Phi(e_t)
=
\bar\Phi_0+\bar\Phi_1 e_t
\le
|\bar\Phi_0|+|\bar\Phi_1|\,E_T
=:\Phi_T.
\end{equation}
Combining \eqref{eq:omega_ineq_1} and \eqref{eq:Phi_upper_T}, we get
\begin{equation}
\label{eq:omega_gronwall}
\dot\omega_t
\le
\left(\Phi_T-\bar\alpha+(1-\bar\gamma)\bar\eta_p\right)\omega_t.
\end{equation}
By Gronwall's lemma, for all $t\in[0,T]\cap[0,T_{\max})$,
\begin{equation}
\label{eq:omega_bound_T}
\omega_t
\le
\omega_0
\exp\left(\left(\Phi_T-\bar\alpha+(1-\bar\gamma)\bar\eta_p\right)t\right)
\le
\omega_0
\exp\left(\left(\Phi_T-\bar\alpha+(1-\bar\gamma)\bar\eta_p\right)T\right)
=:\Omega_T.
\end{equation}

Let us turn to variables $m$ and $\ell$.
Fix $T<+\infty$.
On $[0,T]\cap[0,T_{\max})$ we have $0<\omega_t\le \Omega_T$ and
$0<e_t\le E_T$ by \eqref{eq:e_bound_T}--\eqref{eq:omega_bound_T},
and $0\le r_t\le \bar r_{\max}$ by \eqref{eq:r_bounded_appendix}.

From \eqref{eq:pi_appendix} and $0\le r_t\le \bar r_{\max}$, we obtain
\begin{equation}
\label{eq:pi_abs_bound}
|\pi_t|
\le
\left|1-\bar\delta\bar\nu\right|+|\omega_t|
+\bar r_M |m_t|+\bar r_{\max}|\ell_t|
\le
\left|1-\bar\delta\bar\nu\right|+\Omega_T
+\bar r_M |m_t|+\bar r_{\max}|\ell_t|.
\end{equation}

Next, since $\kappa(\cdot)$ is bounded between $\kappa_{\min}$ and
$\bar\kappa_{max}$ by \eqref{eqn:kappa}, we have for all $u\in\R$,
\begin{equation}
\label{eq:g_abs_bound}
|g(u)|
\le
\left|\frac{\bar\kappa_{\rm abs}}{\bar\nu}-\bar\delta \right| := \bar g_{\rm abs}.
\end{equation}
for all $t<T_{\max}$.
% \begin{equation}
% \label{eq:g_abs_bound}
% |g(\pi_t)|
% \le
% \bar g_{\rm abs}
% :=
% \max\left(
% \left|\frac{\kappa_{\min}}{\bar\nu}-\bar\delta\right|,
% \left|\frac{\bar\kappa_{max}}{\bar\nu}-\bar\delta\right|
% \right).
% \end{equation}

Finally, by \eqref{inflation} and the bound $\omega_t\le \Omega_T$,
\begin{equation}
\label{eq:i_abs_bound}
|i(\omega_t)|
=
\left|\bar\eta_p(\bar\xi\omega_t-1)\right|
\le
\bar\eta_p\left(\bar\xi\omega_t+1\right)
\le
\bar\eta_p\left(\bar\xi\Omega_T+1\right)
=: I_T.
\end{equation}

Combining \eqref{eq:g_abs_bound} and \eqref{eq:i_abs_bound}, we set
\begin{equation}
\label{eq:H_T_def}
H_T
:=
\bar g_{\rm abs}+I_T,
\qquad\text{so that}\qquad
|g(\pi_t)+i(\omega_t)|\le H_T
\quad\text{on }[0,T]\cap[0,T_{\max}).
\end{equation}

Also, define
\begin{equation}
\label{eq:R_L_def}
\sup_{t\in[0,T]}|r_t-\bar\kappa_L|
\le
\bar r_{\max}+|\bar\kappa_L| =: R_L
\end{equation}
Using the equations for $\dot m$ and $\dot\ell$ in \eqref{keen_ponzi},
together with \eqref{eq:kappa_Delta_Psi_abs_bounds} and \eqref{eq:R_L_def},
we obtain for all $t\in[0,T]\cap[0,T_{\max})$,
\begin{equation}
\label{eq:mdot_abs_1}
|\dot m_t|
\le
|\pi_t|
+
(1-\bar\zeta)\,|\kappa(\pi_t)-\bar\delta\bar\nu|
+
|r_t-\bar\kappa_L|\,|\ell_t|
+
|\Delta(\pi_t)|
+
|f_t|
+
|g(\pi_t)+i(\omega_t)|\,|m_t|.
\end{equation}
Moreover,
\begin{equation}
\label{eq:kappa_minus_dnu_bound}
|\kappa(\pi_t)-\bar\delta\bar\nu|
\le
|\kappa(\pi_t)|+\bar\delta\bar\nu
\le
\bar\kappa_{\rm abs}+\bar\delta\bar\nu.
\end{equation}
Also $|f_t|=|\Psi(g(\pi_t)+i(\omega_t))|\le \bar\Psi_{\rm abs}$ by
\eqref{eq:kappa_Delta_Psi_abs_bounds}.

Plugging \eqref{eq:pi_abs_bound}, \eqref{eq:kappa_minus_dnu_bound},
\eqref{eq:R_L_def}, \eqref{eq:kappa_Delta_Psi_abs_bounds}, and
\eqref{eq:H_T_def} into \eqref{eq:mdot_abs_1} yields
\begin{equation}
\label{eq:mdot_abs_2}
|\dot m_t|
\le
A_T^{(m)}
+
B_T^{(m)}\left(|m_t|+|\ell_t|\right),
\end{equation}
where the constants can be chosen explicitly as
\begin{equation}
\label{eq:A_B_m_explicit}
A_T^{(m)}
:=
\left|1-\bar\delta\bar\nu\right|+\Omega_T
+
(1-\bar\zeta)\left(\bar\kappa_{\rm abs}+\bar\delta\bar\nu\right)
+
\bar\Delta_{\rm abs}
+
\bar\Psi_{\rm abs},
\end{equation}
\begin{equation}
\label{eq:B_m_explicit}
B_T^{(m)}
:=
\max\left(
\bar r_M+H_T,\;
\bar r_{\max}+R_L
\right)
=
\max\left(
\bar r_M+H_T,\;
2\bar r_{\max}+|\bar\kappa_L|
\right).
\end{equation}

Similarly, from the equation for $\dot\ell$ in \eqref{keen_ponzi},
\begin{equation}
\label{eq:ldot_abs_1}
|\dot\ell_t|
\le
\bar\zeta\,|\kappa(\pi_t)-\bar\delta\bar\nu|
+
|r_t-\bar\kappa_L|\,|\ell_t|
+
|f_t|
+
|g(\pi_t)+i(\omega_t)|\,|\ell_t|.
\end{equation}
Using \eqref{eq:kappa_minus_dnu_bound}, \eqref{eq:R_L_def},
\eqref{eq:kappa_Delta_Psi_abs_bounds}, and \eqref{eq:H_T_def},
we obtain
\begin{equation}
\label{eq:ldot_abs_2}
|\dot\ell_t|
\le
A_T^{(\ell)}
+
B_T^{(\ell)}\left(|m_t|+|\ell_t|\right),
\end{equation}
with explicit constants
\begin{equation}
\label{eq:A_ell_explicit}
A_T^{(\ell)}
:=
\bar\zeta\left(\bar\kappa_{\rm abs}+\bar\delta\bar\nu\right)
+
\bar\Psi_{\rm abs},
\end{equation}
\begin{equation}
\label{eq:B_ell_explicit}
B_T^{(\ell)}
:=
R_L+H_T
\le
\bar r_{\max}+|\bar\kappa_L|+H_T.
\end{equation}
Let
\begin{equation}
\label{eq:A_B_for_u}
A_T
:=
A_T^{(m)}+A_T^{(\ell)},
\qquad
B_T
:=
B_T^{(m)}+B_T^{(\ell)}.
\end{equation}
Then \eqref{eq:mdot_abs_2} and \eqref{eq:ldot_abs_2} imply that for all $t$
\begin{equation}
\label{eq:u_derivative}
\frac{d}{dt}\left(|m_t|+|\ell_t|\right)
\le
|\dot m_t|+|\dot\ell_t|
\le
A_T+B_T\left(|m_t|+|\ell_t|\right).
\end{equation}
By Gronwall's lemma, for all $t\in[0,T]\cap[0,T_{\max})$,
\begin{equation}
\label{eq:u_bound}
|m_t|+|\ell_t|
\le
\left(|m_0|+|\ell_0|\right)e^{B_T t}
+
\frac{A_T}{B_T}\left(e^{B_T t}-1\right),
\end{equation}
with the obvious modification when $B_T=0$.
In particular, $(m_t,\ell_t)$ are bounded on $[0,T]$.

Let us wrap up \emph{a priori} estimates.
Let $T<+\infty$.
We showed that $\omega$, $e$, $m$, and $\ell$ are bounded on $[0,T]$.
Since $T$ is arbitrary, \eqref{eq:blow_up_alternative} cannot occur.
Hence $T_{\max}=+\infty$.
This proves global existence and uniqueness.
\end{proof}

\begin{Proposition}
\label{prop:S_exist_twosided}
Let $f=(f_t)_{t\ge 0}$ be an $\F$-adapted process with c\`adl\`ag paths and define
\[
f^+_t=\max(f_t,0),
\qquad
f^-_t=\max(-f_t,0).
\]
Let $\bar\lambda^+>0$ and $\bar\lambda^->0$, and set the (predictable) intensities
\[
\lambda^+_t := \bar\lambda^+ f^+_t,
\qquad
\lambda^-_t := \bar\lambda^- f^-_t.
\]
Assume that for every $T<\infty$,
\begin{equation}
\label{eq:intensity_integrable_twosided}
\int_0^T \lambda^+_s\,ds<\infty
\quad\text{and}\quad
\int_0^T \lambda^-_s\,ds<\infty
\qquad\text{a.s.}
\end{equation}
(For instance, \eqref{eq:intensity_integrable_twosided} holds if $f$ is a.s.\ bounded on $[0,T]$.)

Let $(\Omega,\F,(\F_t)_{t\ge 0},\P)$ support a Brownian motion $W$ and two independent
unit-rate Poisson processes $\widehat N^+$ and $\widehat N^-$. Define the time changes
\[
e^\pm(t):=\int_0^t \lambda^\pm_s\,ds,
\qquad t\ge 0,
\]
and the Cox processes
\[
N^\pm_t := \widehat N^\pm_{e^\pm(t)},
\qquad t\ge 0,
\]
with (stochastic) intensities $\lambda^\pm_t$.
Assume $\bar\sigma>0$, $0<\bar J^+<1$ and $\bar J^->0$, and consider the compensated jump--diffusion
\begin{equation}
\label{eq:S_sde_twosided}
dS_t = \bar r_L S_{t-}\,dt + \bar\sigma S_{t-}\,dW_t
- \bar J^+ S_{t-}\big(dN^+_t-\lambda^+_t dt\big)
+ \bar J^- S_{t-}\big(dN^-_t-\lambda^-_t dt\big),
\end{equation}
with $S_0>0$.
Then \eqref{eq:S_sde_twosided} admits a unique strong c\`adl\`ag solution $(S_t)_{t\ge 0}$, given explicitly by
\begin{align}
\label{eq:S_explicit_twosided}
S_t
=
S_0\exp\!\Big(\big(\bar r_L-\tfrac12\bar\sigma^2\big)t+\bar\sigma W_t
+\int_0^t\big(\bar J^+\lambda^+_s-\bar J^-\lambda^-_s\big)\,ds\Big)\,
(1-\bar J^+)^{N^+_t}(1+\bar J^-)^{N^-_t}.
\end{align}
In particular, $S_t>0$ a.s.\ for all $t\ge 0$ and $S$ cannot explode in finite time.
\end{Proposition}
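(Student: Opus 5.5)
The equation is linear in $S$ with a given (exogenous to $S$) driver. I will therefore treat it as a Doléans-Dade stochastic exponential with respect to a semimartingale, and verify the explicit formula by It\^o's formula for jump processes.

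\emph{Step 1: the driving semimartingale.} Assumption \eqref{eq:intensity_integrable_twosided} makes $e^\pm(t)$ finite, continuous and nondecreasing, so $N^\pm_t=\widehat N^\pm_{e^\pm(t)}$ are a.s.\ finite counting processes on every $[0,T]$. They have finitely many jumps, each of size one, and since the time changes are continuous, $N^+$ and $N^-$ a.s.\ never jump simultaneously. Define
\[
X_t:=\bar r_L t+\bar\sigma W_t-\bar J^+\Big(N^+_t-\int_0^t\lambda^+_sds\Big)+\bar J^-\Big(N^-_t-\int_0^t\lambda^-_sds\Big).
\]
This is a c\`adl\`ag semimartingale: a Brownian part plus adapted finite-variation parts. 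Then \eqref{eq:S_sde_twosided} reads $dS_t=S_{t-}dX_t$. Here I only need that the compensator integrals are finite and adapted, not that $N^\pm-\int\lambda^\pm$ are true martingales. That matters because $f$ is only c\`adl\`ag adapted; if predictability of $\lambda^\pm$ is needed, I will use $f_{t-}$, which changes the $ds$-integrals on a null set only.

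\emph{Step 2: existence and uniqueness.} Apply the Doléans-Dade theorem. For any semimartingale $X$ with $X_0=0$, the equation $Z=1+\int_0^\cdot Z_{s-}dX_s$ has the unique solution
\[
\mathcal E(X)_t=\exp\big(X_t-\tfrac12[X]^c_t\big)\prod_{s\le t}(1+\Delta X_s)e^{-\Delta X_s}.
\]
Here $[X]^c_t=\bar\sigma^2t$. The jumps are $\Delta X_s=-\bar J^+$ at jumps of $N^+$ and $+\bar J^-$ at jumps of $N^-$. The factors $e^{-\Delta X_s}$ cancel the $-\bar J^+N^+_t+\bar J^-N^-_t$ contained in $X_t$. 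What remains is exactly \eqref{eq:S_explicit_twosided} with $S=S_0\mathcal E(X)$. Linearity gives uniqueness among all c\`adl\`ag solutions, since the difference of two solutions solves the same equation with initial value $0$.

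As a self-contained alternative, I can check the formula directly. Write $S_t=S_0e^{Y_t}(1-\bar J^+)^{N^+_t}(1+\bar J^-)^{N^-_t}$ with $Y$ continuous, and apply It\^o's formula between jump times. At a jump of $N^+$ the solution is multiplied by $(1-\bar J^+)$, i.e.\ $\Delta S=-\bar J^+S_{t-}$, and similarly for $N^-$. For uniqueness, I can argue pathwise between the (finitely many) jump times, where the equation is a linear SDE with Lipschitz coefficients.

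\emph{Step 3: positivity and non-explosion.} On $[0,T]$, the factors $(1-\bar J^+)^{N^+_t}$ and $(1+\bar J^-)^{N^-_t}$ are strictly positive and finite, because $0<\bar J^+<1$ and $N^\pm_T<\infty$. The exponent is a.s.\ finite on $[0,T]$ by path continuity of $W$ and \eqref{eq:intensity_integrable_twosided}. Hence $0<\inf_{[0,T]}S\le\sup_{[0,T]}S<\infty$ a.s. for every $T$.

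The main obstacle I expect is the measure-theoretic bookkeeping: I must confirm that $N^\pm$ constructed by random time change are semimartingales adapted to a filtration for which $W$ remains a Brownian motion, so that the stochastic integral is well defined. This is delicate when $f$ itself depends on $W$ or $N^\pm$. I will handle it by enlarging to the filtration generated by $W$, $\widehat N^\pm$ and $f$, and by noting that the pathwise formula only requires $W$ to be a semimartingale in that filtration. Alternatively, in the coupled application I can assume $f$ is determined by $r$ through Proposition~\ref{prop:global_existence} and build everything jump-to-jump.
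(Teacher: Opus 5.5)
Your proposal is correct and is essentially the paper's argument: write the equation as $dS_t=S_{t-}\,dX_t$, identify $S=S_0\,\mathcal{E}(X)$ through the Dol\'eans--Dade formula (equivalently, It\^o's formula for $\log S$), read positivity and non-explosion off the explicit product, and get uniqueness from linearity. Your extra bookkeeping goes beyond what the paper checks: you address adaptedness of the time-changed $N^\pm$, and the absence of simultaneous jumps of $N^+$ and $N^-$, which the product $(1-\bar J^+)^{N^+_t}(1+\bar J^-)^{N^-_t}$ requires; note, however, that the no-common-jump property comes from the independence of $\widehat N^\pm$ together with non-anticipativity of $f$, not from continuity of the time changes alone.
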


\begin{proof}
Fix $T<\infty$ and work on $[0,T]$. Under \eqref{eq:intensity_integrable_twosided}, the time changes
$e^\pm(t)=\int_0^t\lambda^\pm_s ds$ are finite for all $t\le T$, hence $N^\pm_t=\widehat N^\pm_{e^\pm(t)}$
are well-defined counting processes with finitely many jumps on $[0,T]$ almost surely.

Rewrite \eqref{eq:S_sde_twosided} in the non-compensated form:
\[
dS_t
=
S_{t-}\Big(\bar r_L+\bar J^+\lambda^+_t-\bar J^-\lambda^-_t\Big)\,dt
+\bar\sigma S_{t-}\,dW_t
-\bar J^+ S_{t-}\,dN^+_t
+\bar J^- S_{t-}\,dN^-_t .
\]
This is a linear SDE with multiplicative Brownian part and multiplicative jumps.
Therefore the Dol\'eans--Dade exponential yields the explicit representation
\eqref{eq:S_explicit_twosided}. (Equivalently, apply It\^o's formula with jumps to $\log S_t$.)

Positivity follows from $S_0>0$, $(1-\bar J^+)^{N^+_t}>0$ because $0<\bar J^+<1$, and
$(1+\bar J^-)^{N^-_t}>0$ because $\bar J^->0$. The explicit form \eqref{eq:S_explicit_twosided}
precludes finite-time blow-up on $[0,T]$, and since $T$ is arbitrary, on $\R_+$.
Uniqueness holds because the equation is linear in $S$ with given drivers $(W,N^+,N^-)$.
\end{proof}

\begin{Proposition}
\label{prop:mu_exist}
Assume the setting of Proposition~\ref{prop:S_exist_twosided}. Define the predictable drift term
\begin{align}
\label{eq:a_drift_def}
a_t
:=
\bar r_L-\tfrac12\bar\sigma^2
+\big(\log(1-\bar J^+)+\bar J^+\big)\lambda^+_t
+\big(\log(1+\bar J^-)-\bar J^-\big)\lambda^-_t,
\end{align}
and the c\`adl\`ag local martingale increment
\begin{align}
\label{eq:dM_def}
dM_t
:=
\bar\sigma\,dW_t
+\log(1-\bar J^+)\big(dN^+_t-\lambda^+_t dt\big)
+\log(1+\bar J^-)\big(dN^-_t-\lambda^-_t dt\big).
\end{align}
Let $\bar\eta_\mu>0$ and $\mu_0\in\R$. Consider the linear SDE
\begin{equation}
\label{eq:mu_sde_appendix}
d\mu_t = \bar\eta_\mu\big(a_t-\mu_t\big)\,dt + dM_t,
\qquad \mu_0\in\R.
\end{equation}
Then \eqref{eq:mu_sde_appendix} admits a unique strong c\`adl\`ag solution $(\mu_t)_{t\ge 0}$ given by
\begin{equation}
\label{eq:mu_explicit}
\mu_t
=
e^{-\bar\eta_\mu t}\mu_0
+\int_0^t \bar\eta_\mu e^{-\bar\eta_\mu (t-s)} a_s\,ds
+\int_0^t e^{-\bar\eta_\mu (t-s)}\,dM_s,
\qquad t\ge 0.
\end{equation}
Moreover, $\mu$ cannot explode in finite time.
\end{Proposition}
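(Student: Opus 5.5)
The plan is to treat \eqref{eq:mu_sde_appendix} as a linear SDE with additive noise and a given predictable forcing $a_t$, and to use the integrating factor $e^{\bar\eta_\mu t}$, exactly as for a classical Ornstein--Uhlenbeck process. First I will check that the data are well defined on every interval $[0,T]$. By \eqref{eq:intensity_integrable_twosided}, $\int_0^T|a_s|\,ds<\infty$ a.s., because $a_t$ is affine in $(\lambda^+_t,\lambda^-_t)$ with constant coefficients. $M$ is a well-defined c\`adl\`ag semimartingale on $[0,T]$: $\bar\sigma W$ is continuous, and each compensated Cox term $N^\pm_t-\int_0^t\lambda^\pm_s ds$ is the difference of a counting process with finitely many jumps on $[0,T]$ a.s.\ (as shown in the proof of Proposition~\ref{prop:S_exist_twosided}) and a finite continuous increasing process. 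In particular, $M$ has finite values and c\`adl\`ag paths, which is all that is needed; the local martingale property is not used for existence.

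Next I will apply the product rule (It\^o's formula for semimartingales; no bracket term arises since $e^{\bar\eta_\mu t}$ is continuous and of finite variation) to $Y_t:=e^{\bar\eta_\mu t}\mu_t$. This gives $dY_t=e^{\bar\eta_\mu t}\bar\eta_\mu a_t\,dt+e^{\bar\eta_\mu t}dM_t$, since the $-\bar\eta_\mu\mu_t$ drift cancels. Integrating and multiplying by $e^{-\bar\eta_\mu t}$ yields \eqref{eq:mu_explicit}. This shows that any solution must have this form, hence uniqueness. For existence, I define $\mu$ by \eqref{eq:mu_explicit} and reverse the computation, that is, differentiate $e^{-\bar\eta_\mu t}Y_t$ by the product rule to verify \eqref{eq:mu_sde_appendix}. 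The stochastic integral $\int_0^t e^{\bar\eta_\mu s}dM_s$ has a deterministic continuous integrand and is therefore well defined. By integration by parts it equals $e^{\bar\eta_\mu t}M_t-M_0-\bar\eta_\mu\int_0^t e^{\bar\eta_\mu s}M_s\,ds$, which is a pathwise expression. This gives a c\`adl\`ag adapted process, so the solution is strong. Alternatively, uniqueness follows directly: the difference of two solutions solves $\dot D=-\bar\eta_\mu D$ with $D_0=0$.

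Finally, for non-explosion on $[0,T]$, I will bound the three terms of \eqref{eq:mu_explicit} pathwise. The first is at most $|\mu_0|$. The second is at most $\bar\eta_\mu\int_0^T|a_s|ds<\infty$ a.s. For the third, the integration-by-parts form gives the bound $|M_t|+e^{-\bar\eta_\mu t}|M_0|+\bar\eta_\mu\int_0^t|M_s|ds\le(2+\bar\eta_\mu T)\sup_{s\le T}|M_s|$, which is finite a.s.\ because c\`adl\`ag paths are bounded on compacts. Since $T$ is arbitrary, $\mu$ cannot explode.

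The only delicate point is the well-posedness of $M$ when the intensities are merely integrable rather than bounded: one must ensure that the compensated Cox terms are genuine semimartingales with finitely many jumps. I will handle this by referring to the time-change construction of Proposition~\ref{prop:S_exist_twosided}, so that the rest of the argument is a purely pathwise linear computation.
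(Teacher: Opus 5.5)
Your proposal is correct and follows essentially the same route as the paper: multiply by the integrating factor $e^{\bar\eta_\mu t}$ and integrate by parts to get the explicit representation. Uniqueness then comes from linearity, since the difference of two solutions solves $\dot D=-\bar\eta_\mu D$ with $D_0=0$, and non-explosion follows from the explicit formula on each $[0,T]$. Two of your steps are more careful than the paper's: the pathwise bound $(2+\bar\eta_\mu T)\sup_{s\le T}|M_s|$ on the stochastic convolution, and the remark that only the semimartingale (not local martingale) property of $M$ is needed. The paper instead handles these points by appealing to finite quadratic variation and finitely many jumps, but the underlying argument is the same.
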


\begin{proof}
Fix $T<\infty$. Under Proposition~\ref{prop:S_exist_twosided}, the processes $\lambda^\pm$ are predictable
and integrable on $[0,T]$, hence $a_t$ is predictable and finite for a.e.\ $t\le T$.
The process $M$ defined by \eqref{eq:dM_def} is a c\`adl\`ag local martingale on $[0,T]$ with finitely many jumps almost surely.

Equation \eqref{eq:mu_sde_appendix} is linear. Multiply both sides by $e^{\bar\eta_\mu t}$ and use
integration by parts:
\[
d\big(e^{\bar\eta_\mu t}\mu_t\big)
=
\bar\eta_\mu e^{\bar\eta_\mu t} a_t\,dt
+ e^{\bar\eta_\mu t}\,dM_t.
\]
Integrating from $0$ to $t$ yields \eqref{eq:mu_explicit}. This provides a strong solution adapted to the filtration generated by $(W,N^+,N^-)$. Uniqueness follows from linearity: if two solutions exist, their difference solves $d(\delta\mu_t)=-\bar\eta_\mu\,\delta\mu_t\,dt$ with $\delta\mu_0=0$, hence $\delta\mu\equiv 0$.

Finally, on each finite horizon $[0,T]$, the first two terms in \eqref{eq:mu_explicit} are finite, and
the stochastic convolution $\int_0^t e^{-\bar\eta_\mu (t-s)}\,dM_s$ is well-defined and finite because $M$ has finite quadratic variation and finitely many jumps on $[0,T]$. Therefore $\mu$ cannot explode in finite time.
\end{proof}

\begin{Corollary}
\label{cor:coupled_existence}
Assume the hypotheses of Proposition~\ref{prop:global_existence} for the real-economy block, and assume moreover that the rate map $\rho:\R\to[0,\bar r_{\max}]$ is bounded and locally Lipschitz.

Consider the coupled system defined as follows:
\begin{itemize}
\item the real-economy variables $(\omega,e,m,\ell)$ solve \eqref{keen_ponzi} with interest rate $r_t=\rho(\mu_t)$,
\item the normalized speculative flow $f_t$ is generated by the real-economy block (as in Proposition~\ref{prop:global_existence}),
\item the asset price $S$ solves \eqref{eq:S_sde_twosided} with intensities $\lambda^\pm_t=\bar\lambda^\pm f^\pm_t$,
\item the trend indicator $\mu$ solves \eqref{eq:mu_sde_appendix}.
\end{itemize}
Then, for any admissible initial condition $(\omega_0,e_0,m_0,\ell_0,S_0,\mu_0)$ with $\omega_0>0$, $e_0>0$ and $S_0>0$, the coupled system admits a unique global strong c\`adl\`ag solution, unique up to indistinguishability.
\end{Corollary}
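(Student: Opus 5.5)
The plan is to build the coupled solution by \emph{interlacing between jump times}: between consecutive jumps of the Cox processes $N^\pm$, the whole system is driven only by the Brownian motion $W$. The real-economy block is an ODE with coefficient $r_t=\rho(\mu_t)$. The pair $(S,\mu)$ solves linear SDEs whose coefficients depend on $f_t=\Psi(g(\pi_t)+i(\omega_t))$. The jumps of $N^\pm$ are produced by thinning the unit-rate Poisson processes through the time changes $e^\pm(t)=\int_0^t\bar\lambda^\pm f^\pm_s\,ds$.

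Concretely, I take the diffusive state $x=(\omega,e,m,\ell,\mu)$ and first solve, up to the first jump time, the SDE with no jumps. In it, $(\omega,e,m,\ell)$ follow \eqref{keen_ponzi} with $r=\rho(\mu)$, and $\mu$ follows \eqref{eq:mu_sde_appendix} with $dN^\pm$ replaced by zero, so that only the compensator terms $-\log(1\mp\bar J^\pm)\lambda^\pm dt$ remain, with $\lambda^\pm=\bar\lambda^\pm f^\pm$. The coefficients are locally Lipschitz on $(0,\infty)^2\times\R^3$. This is the composition of the locally Lipschitz maps $\kappa,\Delta,\Psi$, the max/min operations, the affine maps $\Phi,i,\pi$ and the locally Lipschitz $\rho$. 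The diffusion coefficient is constant ($\bar\sigma$ in the $\mu$ component only). Standard theory therefore gives a unique strong solution up to a possible explosion time.

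Non-explosion of this diffusive phase follows from the earlier propositions. Since $0\le\rho\le\bar r_{\max}$, Proposition~\ref{prop:global_existence} applies pathwise to $(\omega,e,m,\ell)$ with the given bounded path $r_t=\rho(\mu_t)$: its a priori bounds depend only on $\bar r_{\max}$ and $T$, not on the path. Hence $\omega,e>0$ and $(\omega,e,m,\ell)$ is bounded on $[0,T]$ uniformly in $\mu$. Moreover $|f_t|\le\bar\Psi_{\rm abs}$, so $\lambda^\pm_t\le\bar\lambda^\pm\bar\Psi_{\rm abs}$. Then $a_t$ in \eqref{eq:a_drift_def} is bounded, and the explicit representation \eqref{eq:mu_explicit} of Proposition~\ref{prop:mu_exist} shows that $\mu$ cannot explode. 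The explosion time of the diffusive system is therefore infinite.

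Next I define the first jump time $\tau_1=\inf\{t: \widehat N^+_{e^+(t)}+\widehat N^-_{e^-(t)}\ge1\}$, with $e^\pm$ computed from the solution just built. At $\tau_1$, $\mu$ jumps by $\log(1-\bar J^+)$ or $\log(1+\bar J^-)$, while $(\omega,e,m,\ell)$ are continuous. I then restart from the post-jump state and repeat. Because the intensities are bounded by $\bar\lambda^\pm\bar\Psi_{\rm abs}$, each $N^\pm$ is dominated on $[0,T]$ by a Poisson process with finite rate. Hence $\tau_n\to\infty$ a.s., and pasting the pieces gives a global c\`adl\`ag solution.

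Once $f$ and the jump times are known, $S$ is given by \eqref{eq:S_explicit_twosided} of Proposition~\ref{prop:S_exist_twosided}, which yields $S>0$ and non-explosion. Note that $S$ does not feed back into the other variables, so it can be constructed last.

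Uniqueness up to indistinguishability follows by induction on the jump times. On $[\tau_{n},\tau_{n+1})$ two solutions solve the same locally Lipschitz SDE from the same initial state, so they agree. They then generate the same time changes $e^\pm$, and hence the same next jump time and the same jump, since both are read from the same $\widehat N^\pm$.

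The main obstacle is making the self-referential definition of the Cox processes rigorous: the intensity depends on the solution, which in turn depends on the jumps. The piecewise construction resolves this, provided the stopping times $\tau_n$ are properly handled with respect to a filtration containing $W$ and the time-changed $\widehat N^\pm$. I would handle the measurability, in particular the fact that $e^\pm(t)$ is a stopping time for the filtration of $\widehat N^\pm$, with the standard optional time-change argument.
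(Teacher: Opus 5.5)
Your proof is correct, but it is built differently from the paper's. The paper handles the coupled system in one stroke. Step~1 calls it a ``locally Lipschitz stochastic system driven by $(W,\widehat N^+,\widehat N^-)$'' and invokes general theory to get a maximal strong solution. Step~2 rules out explosion by applying Propositions~\ref{prop:global_existence}, \ref{prop:S_exist_twosided} and \ref{prop:mu_exist} pathwise on $[0,T\wedge\tau)$. Step~3 reads uniqueness off Step~1.

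Your non-explosion argument is essentially the paper's Step~2. You replace Steps~1 and~3 by an interlacing construction:
\begin{itemize}
\item solve a jump-free SDE for $(\omega,e,m,\ell,\mu)$ with the compensator drifts;
\item insert the jumps of $\mu$ when the time-changed clocks $\widehat N^\pm_{e^\pm(t)}$ first tick, then restart;
\item get $\tau_n\to\infty$ from the state-independent bound $\lambda^\pm\le\bar\lambda^\pm\bar\Psi_{\rm abs}$;
\item recover $S$ last from \eqref{eq:S_explicit_twosided}, since it does not feed back.
\end{itemize}

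What your route buys is an actual justification of Step~1. The intensities depend on the solution through $e^\pm(t)=\int_0^t\bar\lambda^\pm f^\pm_s\,ds$, so the system is not a Lipschitz SDE driven by fixed noises. The paper leaves that circularity implicit; your construction resolves it directly. You also make explicit why Proposition~\ref{prop:global_existence} may be applied to the solution-dependent rate $\rho(\mu_t)$: its bounds depend only on $\bar r_{\max}$, $T$ and the initial data. The paper's version is shorter and more modular.

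Two small tightenings:
\begin{itemize}
\item In the uniqueness induction, the next jump time is a priori defined separately for each solution. Argue up to the minimum of the two candidate jump times, then show they coincide.
\item For existence and uniqueness you can avoid the optional time-change machinery. Enlarge the filtration at time $0$ with $\sigma(\widehat N^+,\widehat N^-)$; this keeps $W$ Brownian by independence and makes every $\tau_n$ a stopping time. The time-change argument is needed only if you also want $N^\pm_t-\int_0^t\lambda^\pm_s\,ds$ to be local martingales in the natural filtration.
\end{itemize}
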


\begin{proof}
\textbf{Step 1 (local existence/uniqueness).}
On any bounded domain in the state variables, the drift coefficients of the real-economy block are locally Lipschitz because $\Phi$ and $i$ are affine and the ReLU-type functions $\kappa,\Delta,\Psi$ are globally Lipschitz (hence locally Lipschitz). The feedback $r_t=\rho(\mu_t)$ is locally Lipschitz in $\mu$ by assumption. The intensities $\lambda^\pm_t=\bar\lambda^\pm f^\pm_t$ are locally Lipschitz functions of $f_t$ and are nonnegative. Therefore the coupled dynamics define a locally Lipschitz stochastic system driven by $(W,\widehat N^+,\widehat N^-)$, hence there exists a unique maximal strong solution up to an explosion time $\tau\in(0,\infty]$.

\textbf{Step 2 (a priori bounds and non-explosion).}
Fix $T<\infty$. Since $\rho$ takes values in $[0,\bar r_{\max}]$, we have $0\le r_t\le \bar r_{\max}$ for all $t\ge 0$ almost surely. Proposition~\ref{prop:global_existence} then applies pathwise on $[0,T\wedge\tau)$ and yields that $(\omega_t,e_t,m_t,\ell_t)$ remain finite and cannot blow up on $[0,T\wedge\tau)$. In particular, the generated flow $f_t=\Psi(g(\pi_t)+i(\omega_t))$ is bounded on $[0,T\wedge\tau)$ because $\Psi$ is bounded, hence $\lambda^\pm_t=\bar\lambda^\pm f^\pm_t$ are bounded on $[0,T\wedge\tau)$ and satisfy the integrability condition \eqref{eq:intensity_integrable_twosided}.

Proposition~\ref{prop:S_exist_twosided} then yields that $S_t$ is strictly positive and finite on $[0,T\wedge\tau)$, and Proposition~\ref{prop:mu_exist} yields that $\mu_t$ is finite on $[0,T\wedge\tau)$. Consequently, all components of the coupled state remain finite on $[0,T\wedge\tau)$, so explosion cannot occur before $T$. Since $T$ is arbitrary, $\tau=+\infty$ almost surely.

\textbf{Step 3 (global uniqueness).}
Uniqueness on each $[0,T]$ follows from Step~1, hence the global solution is unique up to indistinguishability.
\end{proof}

\section*{Declaration on the Use of Generative AI}

The authors acknowledge the use of generative AI tools for language editing and structural refinement of portions of the manuscript. All mathematical derivations, model constructions, simulations, and economic interpretations were developed, verified, and validated by the authors. The authors take full responsibility for the accuracy, originality, and integrity of the content.

\end{appendix}

%%%%%%%%%%%%%%%%%%%%%%%%%%%%%%%%%%%%%%%%%%%%%%%%%%%%%%%%%%%%%%%%%%%%%%%%%%%
%%%%%%%%%%%%%%%%%%%%%%%%%%%%%%%%%%%%%%%%%%%%%%%%%%%%%%%%%%%%%%%%%%%%%%%%%%%
%%%%%%%%%%%%%%%%%%%%%%%%%%%%%%%%%%%%%%%%%%%%%%%%%%%%%%%%%%%%%%%%%%%%%%%%%%%
\bibliographystyle{plainnat}
\bibliography{finance}

\end{document}